%
%
%

\documentclass[reqno,a4paper,10pt]{amsproc}




\usepackage{graphicx}         
\usepackage{amsfonts}
\usepackage{amssymb}
\usepackage{eucal}
\usepackage[latin1]{inputenc}
\usepackage[all]{xy}


\vfuzz2pt 
\hfuzz2pt 



\newtheorem{theorem}{Theorem}[section]
\newtheorem{corollary}[theorem]{Corollary}
\newtheorem{lemma}[theorem]{Lemma}
\newtheorem{proposition}[theorem]{Proposition}
\theoremstyle{definition}

\newtheorem{remark}[theorem]{Remark}


\newcommand{\eps}{\epsilon}

\renewcommand{\l}{\lambda}


\newcommand{\PB}{\left\{\cdot\,,\cdot\right\}}

\newcommand{\pb}[1]{\left\{#1\right\}}

\renewcommand{\(}{\left(}
\renewcommand{\)}{\right)}

\newcommand{\set}[1]{\left\{#1\right\}}



\newcommand{\bbR}{\mathbb R}




\newcommand{\ds}{\displaystyle}

\newcommand{\leqs}{\leqslant}
\newcommand{\geqs}{\geqslant}
\newcommand{\pp}[2]{\frac{\partial#1}{\partial#2}}
\newcommand{\p}{\partial}

\newcommand{\Rk}{\hbox{rank\,}}
\newcommand{\diff}{{\rm d }}

\renewcommand{\leq}{\leqs}




\newif\ifprivate
\privatefalse

 \numberwithin{equation}{section}

\def\???{\ifprivate {\bf {???}} \marginpar{{\Huge {\bf ?}}}\else \fi}
\numberwithin{equation}{section}

\begin{document}  

\nocite{*} 

\parskip 4pt
\baselineskip 16pt


\title[Generalized Lotka-Volterra systems]
{Liouville integrability and superintegrability of  a generalized Lotka-Volterra system and its Kahan
discretization}

\author[Kouloukas, Quispel, and Vanhaecke]{Theodoros E. Kouloukas$^{1,3}$, G. R. W. Quispel$^{1}$ and Pol Vanhaecke$^{2}$ 
\\
\\$^{1}$Department of Mathematics and Statistics,
\\
La Trobe University, Bundoora VIC 3086, Australia \\
$^{2}$Laboratoire de Math\'ematiques et Applications, \\  
UMR 7348 du CNRS, Universit\'e de Poitiers,  \\ 
 86962 Futuroscope Chasseneuil
Cedex, France \\ 
$^{3}$SMSAS, University of Kent, Canterbury, UK}

\email{theodoroskouloukas@gmail.com}

\email{R.Quispel@latrobe.edu.au}

\email{pol.vanhaecke@math.univ-poitiers.fr}

\subjclass[2010]{37J35, 39A22}

\keywords{Integrable systems, superintegrability, Kahan discretization}

\begin{abstract}

We prove the Liouville and superintegrability of a generalized Lotka-Volterra system and its Kahan
discretization.

\end{abstract}

\maketitle

\tableofcontents

\section{Introduction}

The Kac-van Moerbeke system is a prime example of an integrable system, described by the differential equations
\begin{equation}\label{eq:KM_intro}
  \dot x_i=x_i(x_{i+1}-x_{i-1})\,,\qquad (i=1,\dots,n)\;,
\end{equation}%
where $x_0=x_{n+1}=0$. It was first introduced and studied, together with some of its generalizations, by Lotka to
model oscillating chemical reactions and by Volterra to describe population evolution in a hierarchical system of
competing species (see \cite{Lotka, Volterra}). By now, many generalizations of (\ref{eq:KM_intro}) have been
introduced and studied, often from the point of (Liouville or algebraic) integrability \cite{Bog2,Fer,FV} or Lie
theory \cite{Bog2,dam}, but also in relation with other integrable systems \cite{DF,Moser,KKQTV}. In our recent
study \cite{KKQTV}, a natural generalization of (\ref{eq:KM_intro}) came up in the study of a class of multi-sums
of products: we considered the system
\begin{equation}\label{eq:fat_LV_ode}
  \dot x_i=x_i\left(\sum_{j>i}x_{j}-\sum_{j<i}x_{j}\right)\,,\qquad (i=1,\dots,n)\;,
\end{equation}%
we showed its Liouville and superintegrability and we used it to show the Liouville and superintegrability (or
non-commutative integrability) of the Hamiltonian system defined by the above-mentionned class of functions. The
system (\ref{eq:fat_LV_ode}) has a Hamiltonian structure, described by the Hamiltonian function and Poisson
structure, which are respectively given by
\begin{equation}\label{eq:fat_LV_ham}
  H=\sum_{i=1}^n x_i\;,\qquad \pb{x_i,x_j}=x_ix_j\;,\quad (i<j)\;.
\end{equation}%
We consider in the present paper the case of a general linear Hamiltonian 
\begin{equation}\label{eq:GLV_ham}
  H=\sum_{i=1}^n a_ix_i\;,
\end{equation}%
with the Poisson structure still given by (\ref{eq:fat_LV_ham}). The differential equations which describe this
Hamiltonian system are given by
\begin{equation}\label{eq:GLV_ode}
  \dot x_i=x_i\left(\sum_{j>i}a_jx_{j}-\sum_{j<i}a_jx_{j}\right)\,,\qquad (i=1,\dots,n)\;.
\end{equation}%
When all the parameters $a_i$ are different from zero, a trivial rescaling (which preserves the Poisson structure)
leads us back to (\ref{eq:fat_LV_ham}), so the novelty of our study is mainly concerned with the case where at
least one (but not all!) of the parameters $a_i$ is zero, though all results below are also valid in case all the
parameters $a_i$ are different from zero. By explicitly exhibiting a set of $[(n+1)/2]$ involutive (Poisson
commuting) rational functions, which are shown to be functionally independent, we show that (\ref{eq:GLV_ham}) is
Liouville integrable (Theorems \ref{thm:integrable} and \ref{thm:integrable_odd}). We also exhibit $n-1$
functionally independent first integrals, thereby showing that (\ref{eq:GLV_ode}) is superintegrable (Theorem
\ref{thm:ode_super}).  Finally, we construct for any initial conditions explicit solutions of (\ref{eq:GLV_ode})
(Proposition \ref{prp:ode_solutions}).

In Section \ref{sec:discrete} we study the Kahan discretization (see e.g.\ \cite{MR3002854}) of
(\ref{eq:GLV_ode}), which we explicitly describe (Proposition \ref{prop:kahan}). We also show that the map
defined by the Kahan discretization is a Poisson map (Proposition \ref{prop:kahan_poisson}). Upon comparing the
latter map with the solutions to the continuous system (\ref{eq:GLV_ode}), we prove that the Kahan map is a time
advance map for this Hamiltonian system, and we derive from it that the discrete system is both Liouville and
superintegrable, with the same first integrals as the continuous system (Proposition \ref{prop:discrete_sol} and
Corollary \ref{cor:kahan_integrable}). 

We finish the paper with some comments and perspectives for future work (Section \ref{sec:comments}).

\section{A generalized Lotka-Volterra system}
Let $n$ be an arbitrary positive integer. We consider on $\bbR^n$ the generalized Lotka-Volterra system
\begin{equation} \label{system1}
  \dot{x_i}=x_i \sum_{j=1}^n A_{ij} x_j\;, \qquad (i=1,\dots,n)\;,
\end{equation}
where $A$ is the square matrix
\begin{equation}\label{Amatrix}
  A=\begin{pmatrix}
  0 & a_2& a_3& \dots&a_n \\
  -a_1 & 0& a_3& \dots&a_n \\
  -a_1 & -a_2& 0& \dots&a_n \\
  \vdots & \vdots & \vdots & \ddots \\
  -a_1 &-a_2&-a_3& \dots&0
\end{pmatrix}\;, 
\end{equation}
and $(a_1,\dots,a_n) \in \mathbb{R}^n\setminus\set{(0,\dots,0)}$. Like most Lotka-Volterra system, it has a linear
function as Hamiltonian, to wit $H:=a_1 x_1+a_2 x_2+\cdots+a_n x_n$; the corresponding (quadratic) Poisson
structure is defined by the brackets $\{x_i,x_j\}:= x_i x_j$, for $1\leq i < j \leq n$. The following elementary
lemma, which will play a key rôle in the proof of Theorem \ref{thm:integrable} below, shows that rescaling the
parameters $a_i$ by non-zero constants leads to isomorphic Hamiltonian systems.
\begin{lemma}\label{lma:rescaling}
  Let $c_1,\dots,c_n$ be arbitrary non-zero real constants. Then the linear change of coordinates $x_i\mapsto
  x_i/c_i$ transforms the generalized Lotka-Volterra system with parameters $a_1,\dots,a_n$ into the generalized
  Lotka-Volterra system with parameters $a_1c_1,\dots,a_nc_n$.
\end{lemma}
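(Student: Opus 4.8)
The plan is to verify the claim by a direct substitution, checking that the rescaled coordinates satisfy the rescaled system. Write $y_i := x_i/c_i$, equivalently $x_i = c_i y_i$. The content of the lemma is two-fold: first, that the differential equations transform as claimed, and second (implicitly, since we speak of ``isomorphic Hamiltonian systems'') that the Poisson structure and the form of the Hamiltonian are preserved.

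For the differential equations, I would start from \eqref{system1} written with the explicit matrix \eqref{Amatrix}, i.e. in the form \eqref{eq:GLV_ode}: $\dot x_i = x_i\bigl(\sum_{j>i} a_j x_j - \sum_{j<i} a_j x_j\bigr)$. Substituting $x_i = c_i y_i$ and dividing both sides by $c_i$ gives $\dot y_i = y_i\bigl(\sum_{j>i} a_j c_j y_j - \sum_{j<i} a_j c_j y_j\bigr)$, which is exactly \eqref{eq:GLV_ode} with each $a_j$ replaced by $a_j c_j$. So the $y_i$ obey the generalized Lotka-Volterra system with parameters $a_1 c_1,\dots,a_n c_n$, as asserted. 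Note the constants $c_j$ attach to the $x_j$ sitting inside the parenthesis, not to the overall factor $x_i$ whose $c_i$ cancels against the $1/c_i$ from $\dot y_i$ — this is the one place to be slightly careful.

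For the Hamiltonian structure, observe that the bracket is unchanged in form: $\{y_i,y_j\} = \{x_i/c_i,\, x_j/c_j\} = \frac{1}{c_i c_j}\{x_i,x_j\} = \frac{1}{c_i c_j} x_i x_j = y_i y_j$ for $i<j$, since the $c_i$ are constants (so the change of coordinates is linear and rescales brackets by scalars). Likewise the Hamiltonian $H = \sum_i a_i x_i = \sum_i (a_i c_i) y_i$ is again linear, with coefficients $a_i c_i$. Thus the Hamiltonian system $(\{\cdot,\cdot\}, H)$ with parameters $a_i$ is carried by $x_i\mapsto x_i/c_i$ to the Hamiltonian system with parameters $a_i c_i$, and the two are isomorphic as Hamiltonian systems.

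There is no real obstacle here; the only thing worth stating explicitly is that $x_i\mapsto x_i/c_i$ is a well-defined diffeomorphism of $\bbR^n$ precisely because each $c_i\neq 0$, and that since the $c_i$ are nonzero the parameter vector $(a_1 c_1,\dots,a_n c_n)$ is still in $\bbR^n\setminus\{(0,\dots,0)\}$ whenever $(a_1,\dots,a_n)$ is, so the target is again an admissible generalized Lotka-Volterra system. One may also remark that, as a consequence, whenever all $a_i\neq 0$ one can choose $c_i = 1/a_i$ to reduce to the system \eqref{eq:fat_LV_ode}–\eqref{eq:fat_LV_ham}; this is the use of the lemma anticipated in the introduction and in the statement of Theorem \ref{thm:integrable}.
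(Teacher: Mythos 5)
Your proof is correct and follows essentially the same route as the paper: the paper's proof consists precisely of your second paragraph (bracket preserved since $\{y_i,y_j\}=\{x_i,x_j\}/(c_ic_j)=y_iy_j$, and $H=\sum_i a_ic_iy_i$ in the new coordinates), with the direct substitution into the differential equations being an additional, harmless verification that the paper leaves implicit.
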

\begin{proof}
Let $y_i:=x_i/c_i$. Then $\pb{y_i,y_j}=\pb{x_i,x_j}/(c_ic_j)=\frac{x_ix_j}{c_ic_j}=y_iy_j$, for any $i<j$, which
shows that the change of variables preserves the Poisson structure. Clearly, in terms of the new variables, the
Hamiltonian reads $H=a_1c_1y_1+\cdots+a_nc_ny_n$, which is the Hamiltonian of the generalized Lotka-Volterra system
with constants $a_ic_i$.
\end{proof}
As an application of the lemma, we have that when the parameters $a_i$ are all non-zero, we can rescale them all to
1, and (\ref{system1}) becomes (\ref{eq:fat_LV_ode}) (which is system (3.5) in \cite{KKQTV}). In this case, the
matrix $A$ is skew-symmetric and so (\ref{system1}) is a genuine Lotka-Volterra system, whose Liouville and
superintegrability have extensively been studied in \cite{KKQTV}. When some of the parameters $a_i$ are zero, we
get new (non-isomorphic) systems. As we will show in this section, all these systems are Liouville and
superintegrable.

For the study of the general case, it is convenient to introduce the functions $v_i:=a_1 x_1+\cdots+a_i x_i$, for
$i=1,\dots,n$; we also set $v_0:=0$. In terms of these functions, $H=v_n$ and the system (\ref{system1}) can
equivalently be written as
\begin{equation} \label{system2}
  \dot{x_i}=x_i(H-v_i-v_{i-1})\;,\qquad  (i=1,\dots,n)\;.
\end{equation}
For $i<j$, one has $\pb{v_i,x_j}=v_ix_j,$ and so the Poisson brackets of the functions $v_i$ are given by
\begin{equation} \label{pb_vi}
  \pb{v_i,v_j}=v_i(v_j-v_i)\;,\qquad (i<j)\;.
\end{equation}
In particular, remembering that $H=v_n$,
\begin{equation}\label{eq:ode_for_v}
  \dot v_i=\pb{v_i,H}=v_i(H-v_i)\;,
\end{equation}
for $i=1,\dots,n$. If $a_1 a_2\dots a_n \neq 0$, the functions $v_i$ define new coordinates on~$\mathbb{R}^n$,
since then $x_{k}=(v_{k}-v_{k-1})/a_{k}$ for $k=1,\dots,n$; moreover, the system (\ref{system1}) totally decouples
in terms of these coordinates since it takes the simple form $\dot{v_i}=v_i(H-v_i)$, for $i=1,\dots,n$. However,
the functions $v_i$ do not define coordinates when at least one of the $a_k$ is zero, because if $a_k=0$ then
$v_k=v_{k-1}$.

With a view to proving Liouville integrability, we define for $k=1,\dots,\left[\frac{n}{2}\right]$ the functions
\begin{equation} \label{JF}
  J_k:=\frac{x_1 x_3 \dots x_{2k-1}}{x_2 x_4 \dots x_{2k}}\;,
\end{equation}
and for $k=1,\dots,\left[\frac{n+1}{2}\right]$ the functions
\begin{equation}\label{E:integrals}
  F_k:=
    \left\{ \begin{array}{ll}
      v_{2k-1}\ds\frac{x_{2k+1}x_{2k+3}\ldots x_{n}}{x_{2k}x_{2k+2}\ldots x_{n-1}}&\mbox{ if}\  n\mbox{ is odd},\\
      \\
      v_{2k}\ds\frac{x_{2k+2}x_{2k+4}\ldots x_n}{x_{2k+1} x_{2k+3}\ldots x_{n-1}}&\mbox{ if} \  n\mbox{ is even}.
    \end{array} \right.
  \end{equation}
Notice that $F_{[(n+1)/2]}=v_n=H$, the Hamiltonian (\ref{eq:GLV_ham}). For odd $n$, we also introduce the function
\begin{equation*}
  C:=\frac{x_1 x_3 \dots x_{n}}{x_2 x_4 \dots x_{n-1}}\;.
\end{equation*}
\begin{proposition} \label{invol}
  For any $k,l\in \{1,\dots,\left[\frac{n}{2}\right]\}$, 
  \begin{equation} \label{inv}
    \{J_k,J_l\}=\{F_k,F_l\}=\pb{F_k,H}=0\;.
  \end{equation}
  Moreover, when $n$ is odd, $C$ is a Casimir function of the Poisson bracket $\PB$.
\end{proposition}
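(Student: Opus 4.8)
The plan is to reduce the whole proposition to the behaviour of the log-canonical bracket $\pb{x_i,x_j}=x_ix_j$ $(i<j)$ on Laurent monomials. For a Laurent monomial $g=x_1^{\beta_1}\cdots x_n^{\beta_n}$ put $\sigma_i(\beta):=\sum_{j>i}\beta_j-\sum_{j<i}\beta_j$; a one-line application of the Leibniz rule gives $\pb{x_i,g}=\sigma_i(\beta)\,x_i\,g$, hence, for Laurent monomials $f=\prod_i x_i^{\alpha_i}$ and $g=\prod_j x_j^{\beta_j}$,
\begin{equation*}
  \pb{f,g}=\Bigl(\sum_i\alpha_i\,\sigma_i(\beta)\Bigr)f\,g\;.
\end{equation*}
In particular a Laurent monomial is a Casimir as soon as $\sigma_i(\beta)=0$ for all $i$, and two of them Poisson commute as soon as $\sum_i\alpha_i\sigma_i(\beta)=0$. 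I would record this first, together with the companion remark that if $g$ is a Laurent monomial in the variables $x_{i+1},\dots,x_n$ only, of total degree $d:=\sum_j\beta_j$, then $\pb{v_i,g}=d\,v_i\,g$ — again a one-line consequence of the Leibniz rule and of $\pb{v_i,x_j}=v_ix_j$ for $j>i$.

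Given these two tools, the assertions about the $J_k$'s and about $C$ are pure bookkeeping on exponent vectors. The exponent vector of $J_l$ in (\ref{JF}) is $(+1,-1,+1,\dots,-1)$ on the positions $1,\dots,2l$ and zero afterwards; it has total degree $0$, and a parity count gives $\sigma_i=-1$ for $1\le i\le 2l$ and $\sigma_i=0$ for $i>2l$. Assuming $k\le l$, the exponent vector of $J_k$ is supported inside $\{1,\dots,2k\}\subseteq\{1,\dots,2l\}$, so the coefficient $\sum_i\alpha_i\sigma_i(\beta)$ above equals $-\sum_{i=1}^{2k}(-1)^{i+1}=0$, whence $\pb{J_k,J_l}=0$ (the remaining cases following by antisymmetry). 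Likewise, for odd $n$ the exponent vector of $C$ is $(+1,-1,\dots,+1)$ on the positions $1,\dots,n$; it has total degree $1$, and the parity count now yields $\sigma_i=0$ for every $i\in\{1,\dots,n\}$, so $C$ is a Casimir of $\PB$.

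For the $F_k$'s I would write $F_k=v_{s_k}R_k$ with $s_k:=2k-1$ and $R_k:=x_{2k+1}x_{2k+3}\cdots x_n/(x_{2k}x_{2k+2}\cdots x_{n-1})$ when $n$ is odd (and $s_k:=2k$ with the analogous $R_k$ when $n$ is even), and note that $R_k$ has total degree $0$ and only involves variables $x_m$ with $m>s_k$. The companion remark then gives $\pb{v_{s_k},R_k}=0$ — the two factors of $F_k$ commute — and, for $k\le l$, also $\pb{v_{s_k},R_l}=0$. The monomial formula, with $\sigma_m=1$ for $m\ge 2k$, gives $\pb{R_k,x_m}=-R_kx_m$ for $m\ge 2k$ and $\pb{R_k,x_m}=0$ for $m<2k$, whence
\begin{equation*}
  \pb{R_k,v_{s_l}}=-R_k\sum_{m=2k}^{s_l}a_mx_m=-R_k\,(v_{s_l}-v_{s_k})\;;
\end{equation*}
and, $R_l$ being of total degree $0$ with support in indices $\ge 2k$, the monomial formula also gives $\pb{R_k,R_l}=0$. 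Expanding $\pb{v_{s_k}R_k,v_{s_l}R_l}$ by the Leibniz rule and using $\pb{v_{s_k},R_l}=0$ leaves
\begin{equation*}
  \pb{F_k,F_l}=v_{s_k}v_{s_l}\pb{R_k,R_l}+v_{s_k}\pb{R_k,v_{s_l}}R_l+\pb{v_{s_k},v_{s_l}}R_kR_l\;,
\end{equation*}
and substituting $\pb{R_k,R_l}=0$, the value of $\pb{R_k,v_{s_l}}$ just found, and $\pb{v_{s_k},v_{s_l}}=v_{s_k}(v_{s_l}-v_{s_k})$ from (\ref{pb_vi}), the last two terms cancel and $\pb{F_k,F_l}=0$ for $k\le l$, hence for all $k,l$ by antisymmetry. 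Since $H=v_n=F_{[(n+1)/2]}$ occurs as an $F_l$ (with $R_l=1$), the identity $\pb{F_k,H}=0$ is the special case $l=[(n+1)/2]$ of this computation.

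The conceptual content is small; the delicate point is the exact cancellation in the last display, which is precisely why the $R_k$'s must be chosen with total degree $0$ and with support beginning at the correct index, so that $\pb{R_k,v_{s_l}}=-R_k(v_{s_l}-v_{s_k})$ is the exact negative of the contribution coming from $\pb{v_{s_k},v_{s_l}}=v_{s_k}(v_{s_l}-v_{s_k})$. The remaining risk is purely clerical — keeping the two parities of $n$, and the shifts $s_k=2k-1$ versus $s_k=2k$, apart — but the even case runs in complete parallel, with $v_{2k-1}$ replaced by $v_{2k}$ throughout.
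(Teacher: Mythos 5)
Your proof is correct and follows essentially the same route as the paper's: the same decomposition $F_k=v_{s_k}R_k$ (the paper's $I_k$), the same bracket $\pb{R_k,v_{s_l}}=-R_k(v_{s_l}-v_{s_k})$ cancelling against $\pb{v_{s_k},v_{s_l}}=v_{s_k}(v_{s_l}-v_{s_k})$, and the same parity counts for $J_k$ and $C$. The only difference is presentational: you package the paper's logarithmic-derivative computations $x_i\,\partial J_k/\partial x_i=(-1)^{i+1}J_k$ into the general Laurent-monomial formula $\pb{f,g}=\bigl(\sum_i\alpha_i\sigma_i(\beta)\bigr)fg$, which is a tidy but equivalent bookkeeping device.
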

\begin{proof}
First, we notice that for any $k=1,\dots,\left[\frac{n}{2}\right]$
\begin{equation} \label{prtder}
  x_i \frac{\partial J_k}{\partial x_i}= \left\{
    \begin{array}{cc}
         (-1)^{i+1}J_k  & \mbox{for } 1 \leq i \leq 2k\;,  \\
          0  & \mbox{for } 2k < i \leq n\;.
    \end{array}
    \right.
\end{equation}
It follows that, for $k<l \in \{1,\dots,\left[\frac{n}{2}\right]\}$, we have
\begin{eqnarray*}
  \{J_k,J_l \}&=&\sum_{1 \leq i < j \leq n} x_i x_j
  \(\frac{\partial{J_k}}{\partial{x_i}}\frac{\partial{J_l}}{\partial{x_j}}-
  \frac{\partial{J_k}}{\partial{x_j}}\frac{\partial{J_l}}{\partial{x_i}}\)\\
  &=&\sum_{1 \leq i < j \leq 2k}[ (-1)^{i+1}J_k (-1)^{j+1}J_l-(-1)^{j+1}J_k (-1)^{i+1}J_l ] \\
  &&+ \sum_{1 \leq i \leq 2k < j \leq 2l}(-1)^{i+1}J_k (-1)^{j+1}J_l=0\;.
\end{eqnarray*}
%
This shows the first equality of (\ref{inv}). We show the two other equalities of (\ref{inv}) for even $n$. To do
this, it suffices to show that $\pb{F_k,F_l}=0$ for $1\leq k<l\leq n/2$ since $F_{n/2}=H$.  We set $F_k=v_{2k}I_k$,
i.e., we define $I_k$ by
\begin{equation*}
  I_k:=\ds\frac{x_{2k+2}x_{2k+4}\ldots x_n}{x_{2k+1} x_{2k+3}\ldots x_{n-1}}\;.
\end{equation*}%
As in (\ref{prtder}), we have that
\begin{equation} \label{prtder2}
  x_i \frac{\partial I_k}{\partial x_i}= \left\{
  \begin{array}{cc}
    0  & \mbox{for } 1 \leq i \leq 2k\;,\\
    (-1)^iI_k  & \mbox{for } 2k < i \leq n\;, 
  \end{array}
  \right.
\end{equation}
from which it follows, as above, that $\pb{I_k,I_l}=0$ and that $\pb{I_k,J_l}=0$ for all $k,l\in \{1,\dots,n/2\}$.
Also, for any $j \in \{ 1,\dots,n \}$
\begin{eqnarray*}
  \pb{I_k,x_j}=\sum_{i=1}^n\pp{I_k}{x_i}\pb{x_i,x_j}
  =\left(\sum_{1\leq i < j} x_i \frac{\partial{I_k}}{\partial{x_i}}-
  \sum_{j < i\leq n} x_i\frac{\partial{I_k}}{\partial{x_i}}\right)x_j
\end{eqnarray*}
and using (\ref{prtder2}) we derive that
\begin{equation} \label{brIx}
  \{I_k,x_j\}= \left\{
    \begin{array}{cc}
          0 &\mbox{for } j \leq 2k\;, \\
         -I_kx_j &\mbox{for } 2k<j\;,
    \end{array}
  \right.
  \mbox { and }
  \{I_k,v_j\}= \left\{
    \begin{array}{cc}
          0&\mbox{for } j\leq 2k\;, \\
         -I_k(v_j-v_{2k})  &\mbox{for } 2k<j\;.
    \end{array}
  \right.
\end{equation}
It follows from (\ref{pb_vi}) and (\ref{brIx}) that, for any $k<l\leq n/2$,
\begin{eqnarray*}
  \{F_k,F_l\}&=&\{v_{2k}I_k,v_{2l}I_l\}=v_{2k}I_l\{I_k,v_{2l}\}+v_{2l}I_k \{v_{2k},I_l \}+I_kI_l\{v_{2k},v_{2l}\}\\
  &=&-v_{2k}I_kI_l(v_{2l}-v_{2k})+0+v_{2k}I_kI_l(v_{2l}-v_{2k})=0\;.
\end{eqnarray*}
This shows the second half of (\ref{inv}) for $n$ even; for $n$ odd, the proof is very similar (in this case,
$H=F_{(n+1)/2}$ and one proves as above that $\pb{F_k,F_l}=0$ for $1\leq k<l\leq (n+1)/2$). Finally we show that
$C$ is a Casimir function (when $n$ is odd). For $j=1,\dots,n$,
\begin{eqnarray*}
  \{C,x_j\}&=&\sum_{i=1}^n\pp C{x_i}\pb{x_i,x_j}=
     \left(\sum_{1\leq i < j} x_i \frac{\partial{C}}{\partial{x_i}}-
    \sum_{j < i\leq n} x_i\frac{\partial{C}}{\partial{x_i}}\right)x_j\\
  &=&\sum_{1\leq i < j}(-1)^{i+1}Cx_j-\sum_{j < i\leq n} (-1)^{i+1}Cx_j=0\;,
\end{eqnarray*}
which shows our claim.
\end{proof}

\begin{theorem}\label{thm:integrable}
  Suppose that $n$ is even. Let $\ell$ denote the smallest integer such that $a_{\ell+1}\neq 0$ (in particular,
  $\ell=0$ when $a_1\neq0$) and let $\lambda:=\left[\frac{\ell}{2} \right]$. The $\frac{n}{2}$ functions
  $J_1,J_2,\dots,J_{\lambda},H,F_{\lambda+1},F_{\lambda+2},\dots,F_{\frac{n}{2}-1}$ are pairwise in involution and
  functionally independent, hence they define a Liouville integrable system on $(\bbR^n,\PB)$.
\end{theorem}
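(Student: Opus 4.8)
The plan is to prove pairwise involution and functional independence separately. Involution is almost immediate from Proposition~\ref{invol}: that proposition, together with its proof, already yields $\{J_k,J_l\}=\{F_k,F_l\}=\{F_k,H\}=\{I_k,J_l\}=0$ for all relevant indices, and since $H=F_{n/2}$ this covers every bracket except $\{J_k,H\}$ and $\{J_k,F_l\}$ with $k\le\lambda<l$. Writing $F_l=v_{2l}I_l$ and using $\{J_k,I_l\}=0$, one is reduced to checking $\{J_k,v_m\}=0$ for $m\in\{2l,n\}$. This is where the hypothesis on $\ell$ enters: since $a_1=\cdots=a_\ell=0$ and $2k\le 2\lambda\le\ell$, the terms of $\{J_k,v_m\}=\sum_i a_i\{J_k,x_i\}$ with $i\le 2k$ vanish because $a_i=0$, while those with $i>2k$ vanish because $\{J_k,x_i\}=0$ — a one–line computation using~(\ref{prtder}) and $\sum_{i=1}^{2k}(-1)^{i+1}=0$, of exactly the same type as the computation of $\{C,x_j\}$ in the proof of Proposition~\ref{invol}. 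Combined with functional independence and the fact that, $n$ being even, the Poisson structure has rank $n$ on the dense open set where all $x_i\neq0$ (there its matrix is $\diag(x_i)\,S\,\diag(x_i)$ with $S$ the constant skew-symmetric matrix having $+1$ above the diagonal, invertible since $n$ is even), this establishes Liouville integrability.

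For functional independence I would exhibit a non-vanishing $(n/2)\times(n/2)$ minor of the Jacobian of $(J_1,\dots,J_\lambda,H,F_{\lambda+1},\dots,F_{n/2-1})$ at a generic point. The key observation is a decoupling of variables: each $J_k$ with $k\le\lambda$ depends only on $x_1,\dots,x_{2\lambda}$, whereas $H$ and every $F_k$ with $k\ge\lambda+1$ depend only on $x_{2\lambda+1},\dots,x_n$, because $v_{2k}=\sum_{j=\ell+1}^{2k}a_jx_j$ with $\ell\ge2\lambda$, and $I_k$ involves only $x_{2k+1},\dots,x_n$. I would then take the columns $x_2,x_4,\dots,x_{2\lambda}$ together with $x_{\ell+1}$ and $x_{2\lambda+3},x_{2\lambda+5},\dots,x_{n-1}$; by the decoupling the corresponding submatrix is block lower triangular. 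Its top-left $\lambda\times\lambda$ block is $(\partial J_k/\partial x_{2l})$, which by~(\ref{prtder}) is lower triangular with nonzero diagonal entries $-J_k/x_{2k}$. So it remains to show that the bottom-right block $D$ — the Jacobian of $(H,F_{\lambda+1},\dots,F_{n/2-1})$ with respect to $(x_{\ell+1},x_{2\lambda+3},x_{2\lambda+5},\dots,x_{n-1})$, an $N\times N$ matrix with $N:=n/2-\lambda$ — is invertible at a generic point.

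To evaluate $\det D$ I would set $y_i:=x_{2\lambda+i}$, so that $H=\tilde v_{2N}$ and $F_{\lambda+q}=\tilde v_{2q}\tilde I_q$, where $\tilde v_{2q}:=\sum_{i=1}^{2q}a_{2\lambda+i}y_i$ and $\tilde I_q:=(y_{2q+2}y_{2q+4}\cdots y_{2N})/(y_{2q+1}y_{2q+3}\cdots y_{2N-1})$; note that the coefficient of $y_{\ell+1-2\lambda}\in\{y_1,y_2\}$ in every $\tilde v_{2q}$ equals $a_{\ell+1}\neq0$. Now factor the $\tilde I_q$ (nonzero off the coordinate hyperplanes) out of the $F_{\lambda+q}$-row for $q=1,\dots,N-1$, then subtract the $H$-row from each of these rows: in the $F_{\lambda+q}$-row this kills the $y_{\ell+1-2\lambda}$-entry as well as every $y_{2q'+1}$-entry with $q'<q$, so the resulting matrix is again block triangular, its surviving $(N-1)\times(N-1)$ block being upper triangular with diagonal entries $-(\tilde v_{2q}/y_{2q+1}+a_{2\lambda+2q+1})$. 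Expanding along the $y_{\ell+1-2\lambda}$-column gives $\det D=\pm\,a_{\ell+1}\prod_{q=1}^{N-1}\tilde I_q\prod_{q=1}^{N-1}(\tilde v_{2q}/y_{2q+1}+a_{2\lambda+2q+1})$, which is not identically zero: each $\tilde v_{2q}$ genuinely involves $y_1$ or $y_2$ with nonzero coefficient $a_{\ell+1}$, so a generic point lies off all these zero loci. Hence the chosen minor is nonzero on a dense open subset of $\bbR^n$, which is functional independence.

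The only delicate point is the last paragraph: one must keep the index ranges straight according to the parity of $\ell$ — this governs whether the extra column is $x_{\ell+1}=x_{2\lambda+1}$ (case $a_{2\lambda+1}\neq0$) or $x_{\ell+1}=x_{2\lambda+2}$ — and verify that the chosen columns really do produce the two nested block-triangular patterns. Once that bookkeeping is in place the determinant evaluation is short, and the involution part is an immediate corollary of Proposition~\ref{invol}.
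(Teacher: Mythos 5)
Your proof is correct, and while the involution part coincides with the paper's argument (reduce $\pb{J_k,F_l}$ to $\pb{J_k,v_{2l}}$ via $\pb{J_k,I_l}=0$, then use $a_1=\dots=a_\ell=0$ together with (\ref{prtder}) and the alternating sum $\sum_{i=1}^{2k}(-1)^{i+1}=0$), your treatment of functional independence takes a genuinely different route. The paper first degenerates to the extreme case where $a_{\ell+1}$ is the \emph{only} nonzero parameter, so that $H=x_{\ell+1}$ and the Jacobian splits into the same two triangular blocks you find; it then recovers the general case indirectly, by observing that the relevant minor is polynomial in $a_{\ell+2},\dots,a_n$, hence stays nonzero for parameters near zero, and by invoking Lemma \ref{lma:rescaling} to propagate nonvanishing to arbitrary parameter values with the same zero pattern. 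You instead compute the minor directly for arbitrary parameters: after factoring the $\tilde I_q$ out of the $F$-rows and subtracting the $H$-row, your bottom-right block becomes triangular with explicit diagonal entries $-(\tilde v_{2q}/y_{2q+1}+a_{2\lambda+2q+1})$, each not identically zero because $\tilde v_{2q}$ contains the term $a_{\ell+1}y_{\ell+1-2\lambda}$ with $a_{\ell+1}\neq 0$. I checked the bookkeeping you flag as delicate: the chosen columns are pairwise distinct in both parities of $\ell$, the entries with $q'<q$ do cancel against the $H$-row since $\partial\tilde I_q/\partial y_{2q'+1}=0$ there, and $\partial \tilde I_q/\partial y_1=\partial\tilde I_q/\partial y_2=0$ kills the first-column entries of the $F$-rows, so the expansion along that column is legitimate. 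Your version buys an explicit closed form for the Jacobian minor and avoids the continuity-plus-rescaling detour (which, in the paper, requires the slightly fussy observation that rescaling preserves the zero pattern of the parameters); the paper's version buys a shorter computation at the single degenerate point. Both are complete proofs.
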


\begin{proof}
We know already from Proposition \ref{invol} that the functions $J_k$ are pairwise in involution, and also the
functions $F_l$ (recall that $F_{n/2}=H$). We show that $\{J_k,F_l\}=0$ for $k=1,\dots, \lambda$ and
$l=\lambda+1,\dots,\frac{n}{2}$. To do this, we use the following analog of (\ref{brIx}), which is easily obtained
from (\ref{prtder}):
\begin{equation*}
  \{J_k,v_j\}= \left\{
    \begin{array}{cc}
          J_k v_j &\mbox{for } j \leq 2k\;, \\
          J_kv_{2k} &\mbox{for } 2k<j\;.
    \end{array}
  \right.
\end{equation*}
It follows that, for the above values of $k,l$, which satisfy $k\leq \l<l$, one has $\pb{J_k,v_{2l}}=J_kv_{2k}=0$
(the last equality follows from $2k\leq 2\l\leq\ell$ and $v_i=0$ for $i\leq\ell$), and so
\begin{eqnarray*}
  \{J_k,F_l\}= \{J_k,v_{2l}I_l\}=v_{2k}\{J_k,I_l\}+I_l\{J_k,v_{2l}\}=0\;;
\end{eqnarray*}
in the last step we also used that the functions $I_i$ and $J_j$ are in involution (see the proof of Proposition
\ref{invol}). This shows that the $\frac{n}{2}$ functions
\begin{equation}\label{eq:fun_even}
  J_1,J_2,\dots,J_{\lambda},H,F_{\lambda+1},F_{\lambda+2},\dots,F_{\frac{n}{2}-1}
\end{equation}
are pairwise in involution.

We now show that these functions are functionally independent. We first do this when all $a_i$ are zero, except for
$a_{\ell+1}$ which we may suppose to be equal to~1; then $v_i=x_{\ell+1}=H$ for $i>\ell$ and $v_i=0$ for $i\leq
\ell$. The Jacobian matrix of the above functions (\ref{eq:fun_even}) with respect to $x_1,\dots,x_n$ (in that
order) is easily seen to have the following block form:
$$
  Jac=
  \begin{pmatrix}
    A&0&0\\
    0&1&0\\
    0&\star&B
  \end{pmatrix}\;,
$$
where $A$ has size $\l\times\ell$ and $B$ has size $(\frac n2-\l-1)\times (n-\ell-1)$.  We show that this matrix
has full rank $n/2$ (which is equal to the number of rows of $Jac$). To do this, it is sufficient to show that $A$
has full rank $\lambda$ and that $B$ has full rank $n/2-\lambda-1$ (the value of the column vector $\star$ is
irrelevant). Consider the square submatrix $A'$ of $A$ consisting only of its even-numbered columns. For $k<l$ we
have $A'_{kl}=A_{k,2l}=\p J_k/\p x_{2l}=0$, since $J_k$ only depends on $x_1,\dots,x_{2k}$. It follows that $A'$ is
a lower triangular matrix. Moreover, $A'_{kk}=A_{k,2k}=\p J_k/\p x_{2k}\neq0$, hence $A'$ is non-singular. This
shows that $\Rk(A)=\Rk(A')=\lambda$. Similarly, we extract from $B$ a square submatrix $B'$ by selecting from $B$
its even-numbered (respectively odd-numbered) columns when $\ell$ is even (respectively odd). For $k>l$ we have
$B'_{kl}= \p F_{\l+k}/\p x_{2\l+1+2l}=\p(v_{2\l+2k}I_{\l+k})/\p x_{2\l+1+2l}=\p(x_{\ell+1}I_{\l+k})/\p
x_{2\l+1+2l}=x_{\ell+1}\p I_{\l+k}/\p x_{2\l+1+2l}=0$, since $I_{\l+k}$ is independent of
$x_1,\dots,x_{2\l+2k}$. However, $B'_{kk}=x_{\ell+1}\p I_{\l+k}/\p x_{2\l+1+2k}\neq0$, because $I_{\l+k}$ does
depend on $x_{2\l+1+2k}$. This shows that $B'$ is a non-singular upper triangular matrix, hence
$\Rk(B)=\Rk(B')=n/2-\lambda-1$. We have thereby shown that if $H=x_{\ell+1}$, then the $n/2$ functions in
(\ref{eq:fun_even}) are functionally independent; since the rank of the Poisson structure $\PB$ is $n$, we have
shown Liouville integrability in this case.

We now consider the general case, where several of the $a_i$ may be non-zero. We may still suppose that
$a_{\ell+1}=1$; as above, $a_1=\dots=a_\ell=0$. Let us view $a_{\ell+2},\dots,a_n$ as arbitrary parameters and
consider the matrix
$$
  Jac':=
  \begin{pmatrix}
    A'&0&0\\
    0&1&0\\
    0&\star&B'
  \end{pmatrix}\;, 
$$
where $A'$ and $B'$ are square matrices which are constructed as in the previous paragraph. It depends polynomially
on the parameters $a_{\ell+2},\dots,a_n$ and we have shown that the determinant of $Jac'$ is non-zero when we set
all the parameters $a_{\ell+2},\dots,a_n$ equal to zero. By continuity, the determinant remains non-zero when the
parameters $a_{\ell+2},\dots,a_n$ are sufficiently close to zero, which proves that the $n/2$ functions in
(\ref{eq:fun_even}) are functionally independent for such values of the parameters. In view of Lemma
\ref{lma:rescaling}, any non-zero rescaling of the parameters leads to isomorphic systems, so for any values of
$a_{\ell+2},\dots,a_n$, the functions in (\ref{eq:fun_even}) are functionally independent. This shows Liouville
integrability for any values of the parameters $a_1,\dots,a_n$.
\end{proof}

When $n$ is odd, the rank of the Poisson structure $\PB$ is $n-1$, so for Liouville integrability we need $(n+1)/2$
functionally independent functions in involution. Recall from Proposition \ref{invol} that in this case $C$ is a
Casimir function. The Liouville integrability is in this case given by the following theorem, whose proof is
omitted because it is very similar to the proof of Theorem \ref{thm:integrable}.

\begin{theorem}\label{thm:integrable_odd}
  Suppose that $n$ is odd. As before, let $\ell$ denote the smallest integer such that $a_{\ell+1}\neq 0$ and let
  $\lambda:=\left[\frac{\ell}{2} \right]$. The $\frac{n+1}{2}$ functions
  $J_1,J_2,\dots,J_{\lambda},H,F_{\lambda+2},$ $F_{\lambda+3},\dots,F_{\frac{n-1}{2}},C$ are pairwise in involution
  and functionally independent, hence define a Liouville integrable system on $(\bbR^n,\PB)$.
\end{theorem}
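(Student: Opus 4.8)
The plan is to mimic the proof of Theorem~\ref{thm:integrable}, adapting it to the odd case where the Poisson structure has rank $n-1$ and the Casimir $C$ takes the place of one of the $F_k$'s. First I would note that involutivity of the proposed family is essentially already contained in Proposition~\ref{invol}: the $J_k$ are mutually in involution, the $F_l$ are mutually in involution (and in involution with $H=F_{(n+1)/2}$), and $C$ is a Casimir, hence in involution with everything. The only cross-brackets left to check are $\pb{J_k,F_l}$ for $k\leq\lambda<l$, and this is handled exactly as in Theorem~\ref{thm:integrable}: using the formula $\pb{J_k,v_j}=J_kv_{2k}$ for $2k<j$ (valid for odd $n$ too, being a direct consequence of (\ref{prtder})) together with $v_i=0$ for $i\leq\ell$ and $2k\leq 2\lambda\leq\ell$, one gets $\pb{J_k,v_{2l-1}}=J_kv_{2k}=0$, and then $\pb{J_k,F_l}=\pb{J_k,v_{2l-1}I_l}=v_{2k}\pb{J_k,I_l}+I_l\pb{J_k,v_{2l-1}}=0$ since the $I_i$ and $J_j$ are in involution. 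So the $(n+1)/2$ functions are pairwise in involution.

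For functional independence I would again first treat the degenerate case in which all $a_i$ vanish except $a_{\ell+1}$, which we rescale to $1$ via Lemma~\ref{lma:rescaling}; then $v_i=x_{\ell+1}=H$ for $i>\ell$ and $v_i=0$ for $i\leq\ell$. Writing the Jacobian of $(J_1,\dots,J_\lambda,H,F_{\lambda+2},\dots,F_{(n-1)/2},C)$ with respect to $(x_1,\dots,x_n)$ in the same block form
$$
  Jac=\begin{pmatrix} A&0&0\\ 0&1&0\\ 0&\star&B\end{pmatrix}\;,
$$
the block $A$ comes from the $J_k$, the middle row from $H=x_{\ell+1}$, and the block $B$ from the $F_l$'s together with $C$. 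As in the even case, $A$ contains a lower-triangular nonsingular square submatrix $A'$ (selecting even-numbered columns), so $\Rk(A)=\lambda$. For $B$ one selects the appropriate columns to produce a square submatrix $B'$ that is triangular with nonzero diagonal: the rows coming from $F_{\lambda+2},\dots,F_{(n-1)/2}$ behave as before (each $F_l=v_{2l-1}I_l=x_{\ell+1}I_l$ depends on a shifted block of the later variables, and $\p I_l/\p x_{2l}\neq 0$), while the last row, coming from $C$, is the new ingredient: since $C$ depends on all of $x_1,\dots,x_n$, one checks $x_i\,\p C/\p x_i=(-1)^{i+1}C$ as in the Casimir computation in Proposition~\ref{invol}, so the appropriate entry $x_n\,\p C/\p x_n=\pm C\neq 0$ sits on the diagonal of $B'$ and the entries to its right vanish (there are none, as $C$ corresponds to the last row). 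Hence $B'$ is nonsingular and $\Rk(B)$ is full, giving rank $(n+1)/2$ for $Jac$. Since the rank of $\PB$ is $n-1$ and $(n+1)/2=(n-1-(n-1)/2)\cdot\text{?}$--- more precisely the number of needed independent integrals for Liouville integrability in corank $1$ is $\frac12(n-(n-1))+(n-1)/2=(n+1)/2$---this establishes the degenerate case.

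For the general case I would argue exactly as at the end of the proof of Theorem~\ref{thm:integrable}: keep $a_{\ell+1}=1$ and $a_1=\dots=a_\ell=0$, view $a_{\ell+2},\dots,a_n$ as parameters, and observe that the relevant square submatrix $Jac'$ depends polynomially on these parameters and has nonvanishing determinant when they are all set to $0$; by continuity the determinant stays nonzero for parameters near $0$, and then Lemma~\ref{lma:rescaling} (rescaling by arbitrary nonzero constants gives isomorphic systems) propagates functional independence to all parameter values. Combined with involutivity, this yields Liouville integrability for all $(a_1,\dots,a_n)\neq(0,\dots,0)$.

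The main obstacle, and the only real point of difference from the even case, is making sure the triangular structure of $B'$ survives the substitution of the row coming from the Casimir $C$ for a row that would otherwise come from an $F_l$: one must check that $C$'s partial derivatives do not spoil triangularity, i.e. that selecting the last available column puts a nonzero entry of $C$ on the diagonal and zeros beyond it. This is a short computation using $x_i\,\p C/\p x_i=(-1)^{i+1}C$, entirely parallel to the Casimir verification already carried out in Proposition~\ref{invol}, so no genuinely new difficulty arises; this is exactly why the authors felt entitled to omit the proof.
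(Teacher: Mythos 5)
Your involutivity argument is fine (it is exactly the paper's even-case computation, and the cross brackets $\pb{J_k,F_l}$ vanish for the same reason), and the continuity-plus-rescaling step at the end is also sound. The genuine gap is in the functional independence argument for the degenerate case, and it is not merely cosmetic. First, the claimed block form of the Jacobian is false: $C$ depends on \emph{all} variables, so its row has nonzero entries in the first $\ell$ columns and does not fit into a row of the form $(0,\star,B)$ (this by itself could be absorbed, since a block \emph{lower}-triangular selection is still fine). The fatal point is your choice of columns for $B'$: taking the columns $x_{2l}$ for the rows $F_l$ ($l=\lambda+2,\dots,\frac{n-1}{2}$) and the column $x_n$ for the row $C$ produces a matrix that is \emph{always singular} whenever at least one $F_l$ is present. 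Indeed, on this column set the entries of the row $F_{\lambda+2}=x_{\ell+1}I_{\lambda+2}$ are $-F_{\lambda+2}/x_{2m}$ (for every selected even column, since $I_{\lambda+2}$ depends on all of $x_{2\lambda+4},\dots,x_n$) and $+F_{\lambda+2}/x_n$, while the entries of the row $C$ are $-C/x_{2m}$ and $+C/x_n$; the two rows are proportional, so $\det B'=0$. (Concretely, for $n=7$, $a_1\neq0$, the submatrix of rows $F_2,F_3,C$ and columns $x_4,x_6,x_7$ is singular.) Your stated reason for triangularity -- that the entries ``to the right'' of the diagonal entry of the $C$-row vanish -- checks the wrong condition: for upper triangularity with $C$ as the last row you would need the entries to its \emph{left} to vanish, and they do not, since $\partial C/\partial x_{2m}\neq0$.

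The statement itself is still correct, and the repair is short: assign to $C$ a column corresponding to a variable on which none of the other selected functions depends, for instance $x_{2\lambda+3}$ (or $x_{2\lambda+2}$, resp.\ $x_{2\lambda+1}$, according to the parity of $\ell$); such an index exists because $J_1,\dots,J_\lambda$ only involve $x_1,\dots,x_{2\lambda}$, $H=x_{\ell+1}$, and each $I_l$ with $l\geq\lambda+2$ only involves $x_{2\lambda+4},\dots,x_n$. With rows ordered $J_1,\dots,J_\lambda,H,F_{\lambda+2},\dots,F_{\frac{n-1}{2}},C$ and columns $x_2,\dots,x_{2\lambda},x_{\ell+1},x_{2\lambda+4},x_{2\lambda+6},\dots,x_{n-1},x_{2\lambda+3}$, the selected square matrix is block lower triangular with diagonal blocks $A'$ (lower triangular, nonsingular), $1$, the $F$-only block (upper triangular with nonzero diagonal $\partial F_l/\partial x_{2l}$), and the single nonzero entry $\partial C/\partial x_{2\lambda+3}$; the nonzero entries of the $C$-row in the earlier columns then sit strictly below the diagonal blocks and are harmless, giving $\Rk(Jac)=\frac{n+1}{2}$, after which your continuity and Lemma \ref{lma:rescaling} argument goes through unchanged.
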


We show in the following theorem that the Hamiltonian vector field defined by $H$ is also superintegrable.

\begin{theorem}\label{thm:ode_super}
  The Hamiltonian system (\ref{eq:GLV_ode}) has $n-1$ functionally independent first integrals, hence is
  superintegrable.
\end{theorem}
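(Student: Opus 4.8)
The plan is to exhibit $n-1$ explicit first integrals and then show they are functionally independent, which suffices for superintegrability since the phase space is $n$-dimensional. The natural candidates are already essentially available from the analysis above: the functions $v_i$ satisfy the decoupled equations $\dot v_i = v_i(H-v_i)$ (equation~\eqref{eq:ode_for_v}), and $H = v_n$ is conserved, so each ratio of the form $v_i/v_n$ (or more symmetrically any expression built to kill the common $v_i(H-v_i)$-type drift) should be an integral. More precisely, from $\dot v_i = v_i(H-v_i)$ one computes that the logarithmic derivative $\frac{d}{dt}\log(v_i) = H - v_i$, so for two indices $i,j$ with $a_i,a_j\neq 0$ one gets $\frac{d}{dt}\log(v_i/v_j) = v_j - v_i$, which is not yet an integral; instead the right combination is to use that $\frac{d}{dt}\bigl(\frac{1}{v_i}-\frac{1}{v_j}\bigr)$-type quantities, or — cleanest — to note $\frac{d}{dt}\log\bigl(\frac{v_i}{H-v_i}\bigr) = (H-v_i) + v_i = H$, independent of $i$. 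Hence for any $i,j$ the quantity
\begin{equation*}
  G_{ij}:=\frac{v_i\,(H-v_j)}{v_j\,(H-v_i)}
\end{equation*}
has $\frac{d}{dt}\log G_{ij}=0$, so $G_{ij}$ is a first integral whenever it is defined.

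First I would record these integrals together with the $J_k$ and the Casimir $C$ (when $n$ is odd) already produced in Proposition~\ref{invol}, and also bring in the rational first integrals of ``Kac--van Moerbeke type'' analogous to those used in~\cite{KKQTV}: ratios such as $x_ix_{i+2}\cdots/(x_{i+1}x_{i+3}\cdots)$ adjusted by the $v$'s. The point is that one needs exactly $n-1$ of them that are functionally independent; since $H$ and the Liouville integrals of Theorem~\ref{thm:integrable}/\ref{thm:integrable_odd} already give a large independent family, the task reduces to adjoining enough further integrals to reach $n-1$ and then checking the full collection has Jacobian of rank $n-1$ on a dense open set. Concretely, I would take the $n-1$ functions to be (a suitable relabelling of) the $J_k$'s, the $F_k$'s, the Casimir $C$, and enough of the $G_{ij}$'s — or, most transparently, mimic~\cite{KKQTV}: when all $a_i\neq 0$ the $v_i$ are coordinates and $\dot v_i=v_i(H-v_i)$ decouples completely, giving obviously $n-1$ independent integrals $v_i/(H-v_i)$ divided through by $v_n/(H-v_n)$; then use Lemma~\ref{lma:rescaling} and a continuity/specialization argument, exactly as in the proof of Theorem~\ref{thm:integrable}, to pass to the case where some $a_i$ vanish, replacing the now-degenerate $v_i$'s by the monomial ratios $J_k$, $I_k$ that remain well defined.

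The key computational steps, in order, are: (i) verify $\dot v_i=v_i(H-v_i)$ — already done in~\eqref{eq:ode_for_v}; (ii) deduce that $v_i/(H-v_i)$ has logarithmic time-derivative equal to $H$ for every $i$ with $v_i\not\equiv 0$, hence that ratios of these are first integrals; (iii) assemble a list of $n-1$ such integrals (together with the $J_k$ and, for odd $n$, $C$) that is generically functionally independent, first in the fully non-degenerate case $a_1\cdots a_n\neq 0$ where independence is immediate from the $v_i$ being coordinates, and then in general by the same specialization-plus-continuity argument used for Theorem~\ref{thm:integrable}; (iv) conclude superintegrability. The main obstacle is step (iii) in the degenerate case: when some $a_i=0$ the functions $v_i$ coincide and cease to be coordinates, so one must carefully choose substitute integrals (the multiplicative monomials $J_k$, $I_k$ and the surviving $G_{ij}$ for indices with $a_i\neq 0$) and verify that the Jacobian of the resulting $n-1$ functions still has full rank $n-1$ — this is the analog of the block-triangular rank computation in Theorem~\ref{thm:integrable}, and I expect it to go through by exhibiting an explicit $(n-1)\times(n-1)$ minor that is triangular (after reordering rows and columns) with non-vanishing diagonal, followed by the continuity argument to handle the intermediate parameter values.
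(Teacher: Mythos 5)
Your observation that $\frac{d}{dt}\log\bigl(v_i/(H-v_i)\bigr)=H$ for every $i$, hence that the ratios $G_{ij}$ are first integrals, is correct and is essentially the mechanism behind the superintegrability of the non-degenerate subsystem. But the proposal has a genuine gap exactly where the theorem is non-trivial: the degenerate case where some $a_i=0$. All your $v$-based integrals ($G_{ij}$, and $H$ itself) involve only the variables $x_j$ with $a_j\neq 0$; if $B=\set{j:a_j\neq0}$ has $m$ elements, they yield at most $m-1$ independent functions. To reach $n-1$ you must produce, for each of the $n-m$ variables $x_i$ with $a_i=0$, an integral that actually involves $x_i$, and your proposed substitutes ($J_k$, $I_k$, $C$, the surviving $F_k$) do not suffice in general: the $I_k$ are not first integrals at all (one has $\pb{I_k,H}=-I_k(H-v_{2k})\neq0$ in general, only the products $F_k=v_{2k}I_k$ are), the $J_k$ are integrals only for $k\leq\lambda=[\ell/2]$, and a simple count kills the plan — e.g.\ for $n=8$ with $a_1=a_2=a_5=a_6=a_7=a_8=0$, $a_3,a_4\neq0$, your toolkit gives only $H$, $J_1$, $F_2$, $F_3$ (with $F_1\equiv 0$ and all $G_{ij}$ trivial since $v_3=v_4=\dots=v_8$), i.e.\ at most four integrals where seven are needed. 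The paper fills precisely this hole with the extra rational integrals $K_i=(H-a_{\ell+1}x_{\ell+1})x_i/x_{\ell+1}$ for $i\leq\ell$ and $K_i=(H-a_{\ell+1}x_{\ell+1})v_i^2/(x_ix_{\ell+1})$ for $i>\ell+1$ with $a_i=0$, one per missing variable, after which independence is immediate because $x_i$ occurs only in $K_i$; nothing of this kind appears in your proposal.

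The specialization-plus-continuity argument you invoke also does not transfer from Theorem \ref{thm:integrable} to this statement. In that proof a single fixed family of functions is shown to be in involution for \emph{all} parameter values, and only their independence is checked at the most degenerate parameter point and then propagated by continuity and rescaling. Here the situation is reversed: the integrals of the non-degenerate system degenerate (become identically $1$, identically $0$, or undefined) as parameters vanish, so independence at generic parameters says nothing about the degenerate parameters you actually need to treat, while the simple integrals available at the most degenerate point (such as $x_i/x_2$ or $x_1x_i$ when only one $a_i\neq0$) are not first integrals of the general system, so there is no fixed family to deform. (Minor slips of the same kind: your normalization by $v_n/(H-v_n)$ is meaningless since $H=v_n$, and "the $J_k$" are integrals only up to $k=\lambda$.) The paper avoids continuity altogether by a direct decomposition: the subsystem in the variables $x_j$, $j\in B$, is a rescaled copy of the non-degenerate system, hence superintegrable with $m-1$ integrals, and the explicit $K_i$ handle the remaining variables; your proof would need this (or an equivalent construction) to be complete.
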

\begin{proof}
We denote, as before, by $\ell$ the smallest integer such that $a_{\ell+1}\neq 0$ (in particular, $\ell=0$ when
$a_1\neq0$). Suppose first that $a_{\ell+1}$ is the only $a_i$ which is different from zero; by a simple rescaling,
we may assume $a_{\ell+1}=1$, so that $H=x_{\ell+1}$. Then the equations of motion (\ref{eq:GLV_ode}) take the
following simple form:
\begin{equation}\label{eq:super_simple}
  \dot x_i=\left\{
  \begin{array}{cl}
    x_iH\qquad& i\leqs\ell\;,\\
    0\qquad& i=\ell+1\;,\\
    -x_iH\qquad& i>\ell+1\;.\\
  \end{array}
  \right.
\end{equation}%
When $\ell=0$, a complete set of $n-1$ independent first integrals of (\ref{eq:super_simple}) is given by $H=x_1$
and $x_i/x_2,\ (i=3,\dots,n)$. When $\ell\neq0$, we can take besides the Hamiltonian $H=x_{\ell+1}$ the functions
$x_i/x_1,\ (i=2,\dots,\ell)$ and $x_1x_i,\ (i=\ell+2,\dots,n)$.

In the general case, we partition the set $\{1,2,\dots,n\}$ into three subsets ($A$ or $C$ may be empty):
\begin{eqnarray*}
  A&:=&\{1,2,\dots,\ell\}\;,\\
  B&:=&\{i\mid a_i\neq0\}\;,\\
  C&:=&\{i\mid i>\ell+1\hbox{ and } a_i=0\}\;.
\end{eqnarray*}
Since we have treated the case $\#B=1$, we may henceforth assume that $\#B\geqs 2$. Notice that each function $v_i$
(and in particular $H$) depends only on the variables $x_i$ with $i\in B$. It follows that the differential
equations (\ref{system2}),
\begin{equation*}%
  \dot{x_i}=x_i(H-v_i-v_{i-1})\;,\qquad  (i\in B)\;,
\end{equation*}%
involve only the variables $x_j$ with $j\in B$, so they form a subsystem which is the same as the original system,
but now of dimension $m:=\# B$, and with all parameters $a_i,\ i\in B$ different from zero. As explained above (see
Lemma \ref{lma:rescaling} and the remarks which follow its proof) this subsystem is by a simple rescaling
isomorphic to the system (\ref{eq:fat_LV_ode}), for which we know from \cite{KKQTV} that it is superintegrable,
with $m-1$ first integrals which we denote here by $G_1,\dots,G_{m-1}$. We do not need here the precise formulas
for these functions, but only the fact that they depend only on the variables $x_j$ with $j\in B$; this obvious
fact implies that the functions $G_1,\dots,G_{m-1}$ are first integrals of the full system (\ref{eq:GLV_ode}) as
well. Consider, for $i\in A\cup C$ the following rational function:
\begin{eqnarray*}
  K_i:=\left\{
    \begin{array}{ll}
      \ds\frac{(H-a_{\ell+1}x_{\ell+1})x_i}{x_{\ell+1}}\;,&i\in A\;,\\
      \ds\frac{(H-a_{\ell+1}x_{\ell+1})v_{i}^2}{x_ix_{\ell+1}}\;,&i\in C\;.
    \end{array}
  \right.
\end{eqnarray*}
Notice that $H-a_{\ell+1}x_{\ell+1}$ is different from zero, because $\#B\geqs 2$. For $i\in A$, we have that
\begin{eqnarray*}
  (\ln K_i)^\cdot&=&(\ln(H-a_{\ell+1}x_{\ell+1}))^\cdot+(\ln(x_i/x_{\ell+1}))^\cdot\\
    &=&-\frac{a_{\ell+1}\dot x_{\ell+1}}{H-a_{\ell+1}x_{\ell+1}}+a_{\ell+1}x_{\ell+1}=0\;.
\end{eqnarray*}
Indeed, $\dot x_{\ell+1}=x_{\ell+1}(H-v_{\ell+1}-v_{\ell})=x_{\ell+1} (H-a_{\ell+1}x_{\ell+1})$.  Similarly, for
$i\in C$, we have from (\ref{system2}) and (\ref{eq:ode_for_v}) that
\begin{eqnarray*}
  (\ln K_i)^\cdot&=&(\ln(H-a_{\ell+1}x_{\ell+1}))^\cdot+2(\ln v_{i})^\cdot-(\ln(x_ix_{\ell+1}))^\cdot\\
    &=&-a_{\ell+1}x_{\ell+1}+2(H-v_{i})-(H-2v_{i})-(H-a_{\ell+1}x_{\ell+1})=0\;.
\end{eqnarray*}
This shows that the $n-1$ functions $G_1,\dots,G_{m-1}$ and $K_i$, $i\in A\cup C,$ are first integrals of
(\ref{eq:GLV_ode}). Recall that the functionally independent functions $G_1,\dots,G_{m-1}$ depend on $x_i$ with
$i\in B$ only and notice that for $i\in A\cup C$ the variable $x_i$ appears only in $K_i$. It follows that these
$n-1$ first integrals of (\ref{eq:GLV_ode}) are functionally independent, hence (\ref{eq:GLV_ode}) is
superintegrable.
\end{proof}

Finally, we compute the solution $x(t)$ of (\ref{system1}) which corresponds to any given initial condition
$x^{(0)}=(x_1^{(0)},\dots,x_n^{(0)})$. We also introduce the derived functions $v_i(t)=a_1x_1(t)+\cdots+
a_ix_i(t)$, for $i=1,\dots,n$. We denote by $h_0$ the value of the Hamiltonian $H$ at the initial condition
$x^{(0)}$ and we denote $v_i^{(0)}:=v_i(0)$. It follows from (\ref{system2}) and (\ref{eq:ode_for_v}) that we need
to solve
\begin{equation} \label{system2_bis}
  \frac{\diff x_i}{\diff t}(t)=x_i(t)(h_0-v_i(t)-v_{i-1}(t))\;,\qquad  (i=1,\dots,n)\;,
\end{equation}
where
\begin{equation}\label{eq:ode_for_v_bis}
  \frac{\diff v_i}{\diff t}(t)=v_i(t)(h_0-v_i(t))\;,\qquad  (i=1,\dots,n)\;.
\end{equation}
When $v_i^{(0)}=0$, the latter equation has $v_i(t)=0$ as its unique solution; otherwise (\ref{eq:ode_for_v_bis})
is easily integrated by a separation of variables, giving
\begin{equation}\label{eq:sol_a}
  v_i(t)=\frac1{\frac1{h_0}+C_ie^{-h_0t}}\;,\quad\hbox{ or } \quad v_i(t)=\frac1{t+C_i'}\;,
\end{equation}%
depending on whether $h_0\neq 0$ or $h_0=0$. The integrating constants $C_i$ and $C'_i$ are computed from
$v_i(0)=v_i^{(0)}$, which leads to
\begin{equation*}%
  C_i=\frac1{v_i^{(0)}}-\frac1{h_0}\;,\quad\hbox{ and }\quad C_i'=\frac1{v_i^{(0)}}\;.
\end{equation*}%
The functions $v_i(t)$ in (\ref{eq:sol_a}) have very simple primitives, to wit
\begin{equation}\label{eq:primitives}
  \int v_i(t)\diff t= \ln\(\frac{e^{h_0t}}{h_0}+C_i\)\;,\qquad\hbox{ or } \quad \int v_i(t)\diff t=\ln(t+C_i')\;.
\end{equation}%
Substituted in (\ref{system2_bis}), which we write now as $\frac{\diff\ln x_i}{\diff t}(t)=h_0-v_i(t)-v_{i-1}(t)$,
we obtain by integration and by using the primitives (\ref{eq:primitives}) (or $\int v_i(t)\diff t=$ constant in
case $v_i^{(0)}=0$) and the initial condition $x_i(0)=x_i^{(0)}$, the following result:
\begin{proposition}\label{prp:ode_solutions}
  The solution $x(t)$ of (\ref{system1}) which corresponds to the initial condition
  $x^{(0)}=(x_1^{(0)},\dots,x_n^{(0)})$ is given by
  \begin{equation} \label{solkahan}
    x_i(t)=x_i^{(0)}\frac{(1-f(t)
      h_0)(1+f(t)h_0)}{\(1-f(t)h_0+2f(t)v^{(0)}_{i-1}\)\(1-f(t)h_0+2f(t)v_{i}^{(0)}\)}\;, \quad (i=1,\dots,n)\;,
  \end{equation}
  where $f(t)=\frac{e^{h_0t}-1}{(e^{h_0t}+1)h_0}=\frac{1}{h_0}\tanh(\frac{h_0t}{2})$ when $h_0$ (the value of $H$
  at $x^{(0)}$) is different from zero and $f(t)=t/2$ otherwise.  Also,
  $v_i^{(0)}=a_1x_1^{(0)}+\cdots+a_ix_i^{(0)}.$
\end{proposition}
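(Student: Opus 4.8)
The plan is to decouple the system (\ref{system1}) by passing to the partial sums $v_i$, to integrate the resulting scalar equations explicitly, and then to recover each $x_i(t)$ by one further quadrature. (Alternatively one could simply verify that the right-hand side of (\ref{solkahan}) solves (\ref{system1}) with $x(0)=x^{(0)}$, but the derivation below is no harder.)

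Since the right-hand side of (\ref{system1}) is polynomial, the initial condition $x^{(0)}$ determines a unique maximal solution $x(t)$. First I would record that $H=v_n$ is a first integral of (\ref{system1}): this is the Hamiltonian, or simply equation (\ref{eq:ode_for_v}) with $i=n$, which reads $\dot v_n=v_n(H-v_n)=0$. Hence $H(x(t))\equiv h_0$. Writing $v_i(t):=a_1x_1(t)+\cdots+a_ix_i(t)$ and replacing $H$ by $h_0$, equation (\ref{eq:ode_for_v}) becomes the decoupled logistic system (\ref{eq:ode_for_v_bis}), namely one scalar Bernoulli equation $\dot v_i=v_i(h_0-v_i)$ per index $i$.

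Next I would solve (\ref{eq:ode_for_v_bis}). If $v_i^{(0)}=0$, uniqueness for the scalar equation forces $v_i(t)\equiv0$. If $v_i^{(0)}\neq0$, separation of variables gives the two formulas in (\ref{eq:sol_a}) according to whether $h_0\neq0$ or $h_0=0$, with integration constants fixed by $v_i(0)=v_i^{(0)}$ as in the excerpt; in every case $v_i(t)$ has the elementary primitive (\ref{eq:primitives}). With the functions $v_i(t)$ now known, the system (\ref{system1}) has become the decoupled linear system (\ref{system2_bis}), i.e. $\frac{\diff}{\diff t}\ln x_i=h_0-v_i(t)-v_{i-1}(t)$ along the solution (and $x_i(t)\equiv0$ whenever $x_i^{(0)}=0$, in agreement with (\ref{solkahan})). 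Integrating from $0$ to $t$ and exponentiating gives
\[
  x_i(t)=x_i^{(0)}\, e^{h_0 t}\, e^{-\int_0^t v_i(s)\diff s}\, e^{-\int_0^t v_{i-1}(s)\diff s}\,,
\]
and the primitives (\ref{eq:primitives}) evaluate the two exponential factors in closed form.

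The final step is the algebraic simplification to (\ref{solkahan}). With $f(t)$ as in the statement one has the identities $1-f(t)h_0=\frac{2}{e^{h_0t}+1}$ and $1+f(t)h_0=\frac{2e^{h_0t}}{e^{h_0t}+1}$, whence $e^{h_0t}=\frac{1+f(t)h_0}{1-f(t)h_0}$; substituting (\ref{eq:primitives}) together with $C_i=\frac1{v_i^{(0)}}-\frac1{h_0}$ one checks that $e^{-\int_0^t v_i(s)\diff s}=\frac{1-f(t)h_0}{1-f(t)h_0+2f(t)v_i^{(0)}}$, an identity which also absorbs the degenerate case $v_i^{(0)}=0$ (both sides equal $1$) and, on replacing $f(t)$ by $t/2$, the case $h_0=0$. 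Multiplying the three factors yields (\ref{solkahan}), valid on the maximal interval about $t=0$ on which the denominators do not vanish. I do not expect a genuine obstacle: the proof is a chain of routine quadratures, and the only points needing care are the bookkeeping of the three degenerate sub-cases ($v_i^{(0)}=0$, $x_i^{(0)}=0$, $h_0=0$) and spotting that the $\tanh$-type substitution encoded in $f(t)$ is exactly what converts the exponential answer into the uniform rational expression (\ref{solkahan}).
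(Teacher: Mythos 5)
Your proof is correct and follows essentially the same route as the paper, whose derivation (given in the text immediately preceding the proposition) likewise passes to the decoupled logistic equations (\ref{eq:ode_for_v_bis}) for the $v_i$, integrates them via (\ref{eq:sol_a})--(\ref{eq:primitives}), and then integrates $\frac{\diff \ln x_i}{\diff t}=h_0-v_i-v_{i-1}$ before simplifying with the $f(t)$ substitution. Your explicit handling of the degenerate cases $v_i^{(0)}=0$, $x_i^{(0)}=0$ and $h_0=0$ is a welcome bit of extra care.
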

Notice that when $h_0\neq0$, (\ref{solkahan}) can be rewritten as
\begin{equation*}
  x_i(t)=\frac{x_i^{(0)} e^{th_0}h_0^2}{\(h_0+(e^{th_0}-1) v^{(0)}_{i-1}\)\(h_0+(e^{th_0}-1) v^{(0)}_{i}\)}\;,
  \quad (i=1,\dots,n)\;.
\end{equation*}
\begin{remark}
When several of the parameters $a_i$ in the Hamiltonian function $H$ are equal to zero, so that $H$ is independent
of the corresponding variables $x_i$, the vector field (\ref{eq:GLV_ode}) is a Hamiltonian vector field with
respect to a family of compatible Poisson structures, always with the same Hamiltonian $H$. Indeed, suppose that
$a_i=a_j=0$, with $i< j$. Then, in the computation of the vector field $\dot x_k=\pb{x_k,H}$, $k=1,\dots,n$, the
Poisson brackets $\pb{x_i,x_j}=-\pb{x_j,x_i}$ are not used, so we may replace $\pb{x_i,x_j}=-\pb{x_j,x_i}$ by an
arbitrary function $f_{ij}$ of $x_1,\dots,x_n$ without any effect on the vector field. However, in order for the
new bracket to be a Poisson bracket, it has to satisfy the Jacobi identity, which puts several restrictions on the
function $f_{ij}$. One way to satisfy this restriction is to take $f_{ij}:=a_{ij}x_ix_j$, where $a_{ij}$ is an
arbitrary constant. In fact, replacing $\pb{x_i,x_j}=x_ix_j$ by $\pb{x_i,x_j}=a_{ij}x_ix_j$ for all $i<j$ for which
$a_i=a_j=0$, the new brackets will still be of the general form $\pb{x_i,x_j}=b_{ij}x_ix_j$, known in the
literature as \emph{diagonal brackets}; such brackets are known to automatically satisfy the Jacobi identity
\cite[Example 8.14]{PLV} so they are Poisson brackets. Clearly, any linear combination of these diagonal Poisson
brackets is again a diagonal Poisson bracket, hence all these brackets are compatible. The upshot is that when
$k\geqs 2$ parameters are equal to zero, then (\ref{eq:GLV_ode}) has a multi-Hamiltonian structure: it is
Hamiltonian with respect to a $k\choose 2$-dimensional family of Poisson brackets.
\end{remark}

\section{The Kahan discretization}\label{sec:discrete}
In this section we consider the Kahan discretization of the system (\ref{system1}). Let us recall quickly the
construction of the Kahan discretization of a quadratic vector field $\dot x_i=Q_i(x)$ (see
e.g.\ \cite{MR3002854}). Let $\Phi_i(y,z)$ denote the symmetric bilinear form which is associated to the quadratic
form $Q_i$ and let $\epsilon$ denote a positive parameter, which should be thought of as being small. Then the
\emph{Kahan discretization with step size} $\epsilon$ is the map\footnote{When the map which is defined by the
  discretization is iterated, one often writes it as $x^{(m)}_i\mapsto x_i^{(m+1)}$.} $x_i\mapsto\tilde x_i$,
implicitly defined by
\begin{equation}\label{eq:Kahan_general}
  \tilde{x}_i-x_i=\epsilon\Phi_i(x,\tilde x)\;.
\end{equation}%
We refer to this map as the \emph{Kahan map} (associated to $\dot x_i=Q_i(x)$).  It is well known that the Kahan
map preserves the linear integrals of the initial continuous system (quadratic vector field). So, in our case of
the generalized Lotka-Volterra system, its Hamiltonian function $H=a_1x_1+a_2x_2+\cdots+a_n x_n$ is an invariant of
the Kahan map. As we are going to show in this section the Kahan map (of this system) preserves the Poisson
structure as well; we will also see in the next section that all constants of motion, in particular the ones that
appear in Theorems \ref{thm:integrable}, \ref{thm:integrable_odd} and~\ref{thm:ode_super}, are also invariants of
the Kahan map.

We begin with a lemma which provides an explicit formula for the Kahan discretization of the generalized
Lotka-Volterra system.

\begin{proposition}\label{prop:kahan}
  The Kahan discretization with step size $2\epsilon$ of the system (\ref{system1}) is the rational map
  $\mathcal{K}:(x_1,\dots,x_n) \mapsto (\tilde{x}_1,\dots,\tilde{x}_n)$, given by
  \begin{equation} \label{kahan}
    \tilde{x}_i=x_i\frac{(1-\epsilon H)(1+\epsilon H)}
    {(1-\epsilon H+2\epsilon v_{i-1})(1-\epsilon H+2\epsilon v_{i})}\;, \quad (i=1,\dots,n)\;.
  \end{equation}
\end{proposition}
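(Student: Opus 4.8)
The plan is to apply the defining relation (\ref{eq:Kahan_general}) directly to the quadratic vector field (\ref{system1}) and solve the resulting linear system for $\tilde x_i$. First I would identify the symmetric bilinear form $\Phi_i$ associated to the quadratic form $Q_i(x)=x_i\sum_j A_{ij}x_j$: polarizing gives $\Phi_i(x,\tilde x)=\frac12\big(x_i\sum_j A_{ij}\tilde x_j+\tilde x_i\sum_j A_{ij}x_j\big)$. With step size $2\epsilon$ the Kahan equation reads $\tilde x_i-x_i=2\epsilon\Phi_i(x,\tilde x)=\epsilon\big(x_i\sum_j A_{ij}\tilde x_j+\tilde x_i\sum_j A_{ij}x_j\big)$. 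The key observation making this tractable is that, exactly as in (\ref{system2}), the sums $\sum_j A_{ij}x_j$ and $\sum_j A_{ij}\tilde x_j$ collapse: $\sum_j A_{ij}x_j=H-v_i-v_{i-1}$ and likewise $\sum_j A_{ij}\tilde x_j=\tilde H-\tilde v_i-\tilde v_{i-1}$, where $\tilde v_i:=a_1\tilde x_1+\cdots+a_i\tilde x_i$ and $\tilde H=\tilde v_n$.

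Next I would use the fact, recalled in the paragraph before Proposition~\ref{prop:kahan}, that the Kahan map preserves linear integrals; in particular $\tilde H=H$. (Alternatively this can be read off directly: multiplying the Kahan equation by $a_i$ and summing over all $i$ gives $\tilde H-H=2\epsilon\cdot\frac12(\,\sum_i a_i x_i(H-\tilde v_i-\tilde v_{i-1})+\sum_i a_i\tilde x_i(H-v_i-v_{i-1})\,)$, and a short manipulation using $\sum_i a_i x_i f_i$-type telescoping shows the right-hand side vanishes.) Granting $\tilde H=H$, the Kahan relation becomes the linear recursion
\begin{equation*}
  \tilde x_i-x_i=\epsilon\big(x_i(H-\tilde v_i-\tilde v_{i-1})+\tilde x_i(H-v_i-v_{i-1})\big)\;,\qquad(i=1,\dots,n)\;.
\end{equation*}
Rearranging, $\tilde x_i\big(1-\epsilon(H-v_i-v_{i-1})\big)-x_i\big(1+\epsilon(H-\tilde v_i-\tilde v_{i-1})\big)=0$. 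The strategy now is to pass from the $x_i$-variables to the partial sums: multiply by $a_i$, sum from $1$ to $i$, and obtain a closed recursion relating $v_i,\tilde v_i$ to $v_{i-1},\tilde v_{i-1}$. The natural guess, motivated by the continuous solution (\ref{solkahan}) evaluated at $f=\epsilon$, is
\begin{equation*}
  \tilde v_i=\frac{v_i(1-\epsilon H)(1+\epsilon H)}{(1-\epsilon H+2\epsilon v_i)^{?}}\quad\text{— more precisely}\quad \frac{1}{\tilde v_i}=\frac{1}{v_i}\cdot\frac{1}{1+\epsilon H}+\frac{\text{something}}{1-\epsilon H}\,,
\end{equation*}
i.e. a Möbius-type relation in $1/v_i$; I would verify that $\tilde v_i$ defined this way (equivalently, $\tilde v_i = v_i\,\frac{1+\epsilon H}{1-\epsilon H+2\epsilon v_i}$ after simplification, cf. the discrete analogue of (\ref{eq:sol_a})) satisfies the summed recursion, proceeding by induction on $i$ starting from $\tilde v_0=v_0=0$. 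Once the formula for $\tilde v_i$ (hence $\tilde v_{i-1}$) is in hand, substituting back into the rearranged Kahan relation and solving the single scalar linear equation for $\tilde x_i$ yields (\ref{kahan}) after routine algebraic simplification.

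The main obstacle I anticipate is purely bookkeeping: getting the closed-form expression for $\tilde v_i$ and then checking that the substitution collapses to exactly the product $(1-\epsilon H+2\epsilon v_{i-1})(1-\epsilon H+2\epsilon v_i)$ in the denominator of (\ref{kahan}). There is a small subtlety worth flagging — the linear system (\ref{eq:Kahan_general}) for $\tilde x$ is only uniquely solvable when its coefficient matrix $\mathrm{Id}-\epsilon\,\mathrm{d}Q(x)$ is invertible, i.e. for generic $x$ and small $\epsilon$; since both sides of the claimed identity (\ref{kahan}) are rational in $x$ and $\epsilon$, it suffices to verify the identity on this dense open set, and it then holds as an identity of rational maps. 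I would note this once and not belabor it. A cross-check is immediate and reassuring: comparing (\ref{kahan}) with (\ref{solkahan}) shows the Kahan map is literally the time-$t$ flow of (\ref{system1}) with $f(t)=\epsilon$, which both explains why the computation must close up and foreshadows Proposition~\ref{prop:discrete_sol}.
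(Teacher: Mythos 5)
Your proposal is correct and follows essentially the same route as the paper: write the Kahan relation for (\ref{system2}) using $\tilde H=H$, sum the equations weighted by $a_i$ to obtain a closed relation for $\tilde v_j$ (the paper derives $\tilde v_j = v_j\frac{1+\epsilon H}{1-\epsilon H+2\epsilon v_j}$ by proving $\delta_j=2v_j\tilde v_j$ inductively, whereas you verify the same formula as an ansatz by induction), then substitute back and solve the scalar linear equation for $\tilde x_i$. The only differences are cosmetic, so no further comparison is needed.
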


\begin{proof}
Let us write $\tilde{v}_j=a_1\tilde{x}_1+\cdots+a_j\tilde{x}_j$, in analogy with the functions $v_j$. According to
(\ref{eq:Kahan_general}), the Kahan discretization of (\ref{system2}) (which is equivalent to (\ref{system1})) is
given by
\begin{equation}\label{eq:kahan_v}
  \tilde x_i-x_i=\epsilon x_i(H-\tilde v_i-\tilde v_{i-1})+\epsilon\tilde x_i(H-v_i-v_{i-1})\;,\qquad (i=1,\dots,n)\;,
\end{equation}%
where we have used that $H$ is invariant ($\tilde H=H$).  Summing up these equations, multiplied by $a_i$, for
$i=1,\dots,j$, we get
\begin{equation}\label{eq:kahan_v2}
  \tilde v_j-v_j=\epsilon(v_jH+\tilde v_jH-\delta_j)\;,
\end{equation}%
where $\delta_j$ is given by
$$
  \delta_j:=\sum_{i=1}^ja_ix_i(\tilde v_i+\tilde v_{i-1})+\sum_{i=1}^ja_i\tilde x_i(v_i+v_{i-1})=2v_j\tilde v_j\;.  
$$
The last equality can be proven by an easy recursion on $j$: on the one hand, $\delta_1=a_1x_1\tilde v_1+a_1\tilde
x_1v_1=2v_1\tilde v_1$, while on the other hand
\begin{eqnarray*}
  \delta_{j+1}-\delta_j&=&a_{j+1}x_{j+1}(\tilde v_{j+1}+\tilde v_{j})+a_{j+1}\tilde x_{j+1}(v_{j+1}+v_{j})\\
  &=&2a_{j+1}x_{j+1}\tilde v_j+2a_{j+1}\tilde x_{j+1}v_j+2a_{j+1}^2x_{j+1}\tilde x_{j+1}\;,
\end{eqnarray*}%
and so
\begin{eqnarray*}
 \delta_{j+1}&=&2a_{j+1}x_{j+1}\tilde v_j+2a_{j+1}\tilde x_{j+1}v_j+2a_{j+1}^2x_{j+1}\tilde x_{j+1}+2v_j\tilde v_j\\
              &=&2(v_j+a_{j+1}x_{j+1})(\tilde v_j+a_{j+1}\tilde x_{j+1})=2v_{j+1}\tilde v_{j+1}\;.
\end{eqnarray*}
Solving (\ref{eq:kahan_v2}) (with $\delta_j=2v_j\tilde v_j$) linearly for $\tilde v_j$ we get
\begin{equation} \label{khv}
  \tilde{v}_j=v_j \frac{1+\epsilon H}{1-\epsilon H+2 \epsilon v_j}\;.
\end{equation}
Substituting this into  (\ref{eq:kahan_v}) leads to
$$
  \tilde{x}_i-x_i=
  \epsilon x_i\left(H-v_i\frac{1+\epsilon H}{1-\epsilon H+2\epsilon v_i}-v_{i-1}\frac{1+\epsilon H}{1-\epsilon
    H+2\epsilon v_{i-1}}\right)+\epsilon \tilde{x}_i(H-v_i-v_{i-1})\;,
$$
which can be solved linearly for $\tilde x_i$. It yields the formula (\ref{kahan}).
\end{proof}

\begin{proposition}\label{prop:kahan_poisson}
  The Kahan map $\mathcal{K}$, given by (\ref{kahan}), is a Poisson map with respect to the Poisson bracket $\PB$.
\end{proposition}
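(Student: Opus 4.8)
To prove that $\mathcal K$ is a Poisson map for the diagonal bracket $\pb{x_i,x_j}=x_ix_j$, it suffices to verify that
\begin{equation*}
  \pb{\tilde x_i,\tilde x_j}=\tilde x_i\tilde x_j\;,\qquad(1\leq i<j\leq n)\;,
\end{equation*}
where the left-hand side is computed using $\PB$ on $\bbR^n$ and the explicit formula (\ref{kahan}) for the $\tilde x_k$. The key simplification is that the bracket is diagonal, so that for any smooth functions $\varphi,\psi$ one has the ``logarithmic'' formula $\pb{\varphi,\psi}=\varphi\psi\sum_{p<q}\big(E_p\ln\varphi\, E_q\ln\psi-E_q\ln\varphi\, E_p\ln\psi\big)$, where $E_p:=x_p\partial/\partial x_p$ is the $p$-th Euler operator. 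Hence I would reduce everything to computing the action of the commuting operators $E_p$ on $\ln\tilde x_i$, and then checking the antisymmetrized sum collapses to $\ln\tilde x_i+\ln\tilde x_j$ worth of contribution (i.e. that $\sum_{p<q}(\cdots)=1$).

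The crucial intermediate computation is the bracket of the discretized partial sums $\tilde v_j$. From (\ref{khv}) we have $\tilde v_j=v_j(1+\epsilon H)/(1-\epsilon H+2\epsilon v_j)$, a function of $v_j$ and $H=v_n$ only; since $H$ is a Casimir-like quantity here only in the sense that $\pb{v_j,H}=v_j(H-v_j)$, I would first record $\pb{\tilde v_i,\tilde v_j}$ for $i<j$. Because $\tilde v_i$ depends only on $v_i,H$ and $\tilde v_j$ only on $v_j,H$, the chain rule plus (\ref{pb_vi}) gives
\begin{equation*}
  \pb{\tilde v_i,\tilde v_j}=\frac{\partial\tilde v_i}{\partial v_i}\frac{\partial\tilde v_j}{\partial v_j}\pb{v_i,v_j}
  +\frac{\partial\tilde v_i}{\partial v_i}\frac{\partial\tilde v_j}{\partial H}\pb{v_i,H}
  +\frac{\partial\tilde v_i}{\partial H}\frac{\partial\tilde v_j}{\partial v_j}\pb{H,v_j}\;,
\end{equation*}
and I expect this to simplify, after substituting $\pb{v_i,v_j}=v_i(v_j-v_i)$ and $\pb{v_i,H}=v_i(H-v_i)$, to exactly $\tilde v_i(\tilde v_j-\tilde v_i)$ — i.e. the $\tilde v$'s satisfy the \emph{same} bracket relations (\ref{pb_vi}) as the $v$'s. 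This is the structural heart of the argument: the map $v\mapsto\tilde v$ is ``Poisson'' on the subalgebra generated by the $v_i$.

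Granting that, I would finish as follows. Write $\tilde x_i=(v_i-v_{i-1})\cdot(\text{something})/a_i$ is not available when some $a_i=0$, so instead I work directly with (\ref{kahan}): $\tilde x_i=x_i\,g(H)\,/\,\big(h(v_{i-1})h(v_i)\big)$ where $g(H)=(1-\epsilon H)(1+\epsilon H)$ and $h(v)=1-\epsilon H+2\epsilon v$ (note $h$ also depends on $H$). Then $\ln\tilde x_i=\ln x_i+\ln g(H)-\ln h(v_{i-1})-\ln h(v_i)$, and I apply the Euler operators $E_p$ using $E_p x_i=\delta_{pi}x_i$, $E_p H=a_px_p$, $E_p v_k=a_px_p$ for $p\leq k$ and $0$ otherwise. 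Plugging into the logarithmic bracket formula and using the already-established relations $\pb{v_i,v_j}=v_i(v_j-v_i)$, $\pb{v_i,x_j}=v_ix_j$ (for $i<j$) — equivalently one may bypass Euler operators entirely and compute $\pb{\tilde x_i,\tilde x_j}$ via the ordinary Leibniz rule in the variables $x_i$, $v_{i-1}$, $v_i$, $v_{j-1}$, $v_j$, $H$, treating $\tilde x_i$ as a function of these — one is left with a rational identity in the $v$'s. The main obstacle, and where the real work lies, is this final bookkeeping: there are several cases according to how the index sets $\{i-1,i\}$ and $\{j-1,j\}$ interleave (disjoint, or sharing the endpoint $j-1=i$), and one must carefully collect the cross-terms coming from $\pb{x_i,h(v_j)}$, $\pb{h(v_i),x_j}$, $\pb{h(v_i),h(v_j)}$, $\pb{g(H),\cdot}$, etc. I expect massive cancellation — ultimately because $\mathcal K$ is (as shown later in the paper) the time-$2\epsilon$ flow map of a Hamiltonian vector field, and Hamiltonian flows are automatically Poisson — but the direct verification is a finite, if tedious, algebraic check that the dust settles to $\tilde x_i\tilde x_j$.
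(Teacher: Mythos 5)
Your proposal is correct and takes essentially the same approach as the paper's proof, which likewise writes $\tilde x_k=A_k/B_k$ with $A_k=x_k(1-\epsilon H)(1+\epsilon H)$ and $B_k=(1-\epsilon H+2\epsilon v_{k-1})(1-\epsilon H+2\epsilon v_k)$ and verifies $\pb{\tilde x_i,\tilde x_j}=\tilde x_i\tilde x_j$ by a direct Leibniz-rule computation using $\pb{v_i,v_j}=v_i(v_j-v_i)$, $\pb{x_i,H}=x_i(H-v_i-v_{i-1})$ and the two-case formula for $\pb{x_i,v_j}$ --- exactly the bookkeeping you describe, and which the paper also only summarizes as ``after some computation''. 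Your auxiliary identity $\pb{\tilde v_i,\tilde v_j}=\tilde v_i(\tilde v_j-\tilde v_i)$ is indeed true (a useful consistency check, though not needed by the paper), and your closing heuristic is sound provided you note that $\mathcal K$ is the time-$t_\epsilon$ advance with $t_\epsilon$ a function of $h_0$ rather than a fixed time $2\epsilon$, which is still Poisson since it is the time-one flow of the Hamiltonian vector field of a suitable function of $H$.
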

\begin{proof}
Recall that the Poisson bracket $\PB$ is given by $\{x_i,x_j\}= x_i x_j$, for $1\leq i < j \leq n$. Therefore, we
need to show that $\{\tilde x_i,\tilde x_j\}= \tilde x_i \tilde x_j$, for $1\leq i < j \leq n$. We set, for
$k=1,\dots,n$,
$$
  A_k=x_k(1-\epsilon H)(1+\epsilon H), \ B_k \ = \ (1-\epsilon H+2\epsilon v_{k-1})(1-\epsilon H+2\epsilon v_k)\;,
$$
so that $\tilde x_k=A_k/B_k$. Then
\begin{equation*}
  \{\tilde{x}_i,\tilde{x}_j\}=\frac{A_iA_j\pb{B_i,B_j}-A_iB_j\pb{B_i,A_j}-B_iA_j\pb{A_i,B_j}+B_iB_j\pb{A_i,A_j}}
    {B_i^2B_j^2}\;.
\end{equation*}
%
%
The Poisson brackets in the right-hand side of this equation can be computed using besides (\ref{pb_vi}) the
following formulas:
\begin{equation*} \label{brxv}
  \{x_i,H\}= x_i(H-v_i-v_{i-1})\;, \qquad \{x_i,v_j \}= \left\{
  \begin{array}{ll}
    \ x_i(v_j-v_i-v_{i-1}) \quad& \mbox{for }    i \leq j\;, \\
    - x_i v_j &\mbox{for } i>j\;.
  \end{array}
  \right.
\end{equation*}
After some computation, it leads to 
\begin{eqnarray*}
  \{\tilde{x}_i,\tilde{x}_j\}&=& \frac{(1-\epsilon^2 H^2)^2 x_i x_j}
          {(1-\epsilon H+2\epsilon v_{i-1})(1-\epsilon H+2\epsilon v_{i})
            (1-\epsilon H+2\epsilon v_{j-1})(1-\epsilon H+2\epsilon v_{j})}\\
  &=&\tilde{x}_{i}\tilde{x}_j\;,
\end{eqnarray*}
as was to be shown.

\end{proof}

An easy comparison of the solution (\ref{solkahan}) to the continuous system and the Kahan map (\ref{kahan}) shows
that the Kahan map is a time advance map for the continuous system, hence preserves all integral curves of the
continuous system and so all constants of motion of the continuous system are invariants for the Kahan
map. Precisely, let $x^{(0)}=(x_1^{(0)},\dots,x_n^{(0)})$ be any point of $\bbR^n$ and let $\epsilon\in\bbR$ be
small but positive. As above, the value of $H$ at $x^{(0)}$ is denoted by $h_0$. Let $t_\epsilon$ denote the unique
solution to the equation $f(t_\epsilon)=\epsilon$, where $f(t)$ is the function given in Proposition
\ref{prp:ode_solutions}. With these notations, (\ref{solkahan}) and (\ref{kahan}) imply that
$x_i(t_\epsilon)=\tilde x_i^{(0)}$. It leads, in view of Theorems \ref{thm:integrable} and
\ref{thm:integrable_odd}, to the following corollary:

\begin{corollary}\label{cor:kahan_integrable}
  The Kahan discretization (\ref{kahan}) is Liouville integrable, with invariants given in Theorem
  \ref{thm:integrable} (resp.\ Theorem \ref{thm:integrable_odd}) when $n$ is even (resp.\ when $n$ is
  odd). It is also superintegrable, with invariants given in Theorem  \ref{thm:ode_super}.
\end{corollary}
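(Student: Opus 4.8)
The plan is to deduce Corollary \ref{cor:kahan_integrable} directly from the identification of the Kahan map $\mathcal{K}$ with a time advance map of the continuous flow, which the paragraph preceding the statement already sets up. Concretely, I would first recall that, by Proposition \ref{prop:kahan}, the Kahan map with step size $2\epsilon$ is
\[
  \tilde x_i = x_i\frac{(1-\epsilon H)(1+\epsilon H)}{(1-\epsilon H+2\epsilon v_{i-1})(1-\epsilon H+2\epsilon v_i)}\;,
\]
while by Proposition \ref{prp:ode_solutions} the continuous solution with initial condition $x^{(0)}$ is
\[
  x_i(t)=x_i^{(0)}\frac{(1-f(t)h_0)(1+f(t)h_0)}{\bigl(1-f(t)h_0+2f(t)v^{(0)}_{i-1}\bigr)\bigl(1-f(t)h_0+2f(t)v_i^{(0)}\bigr)}\;.
\]
Since $H$ is invariant under $\mathcal{K}$ (it is a linear integral, preserved by any Kahan discretization), one has $h_0=H(x^{(0)})$ and the two formulas coincide upon substituting $f(t)=\epsilon$. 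Thus, fixing $x^{(0)}\in\bbR^n$ and $\epsilon>0$ small, I would let $t_\epsilon$ be the unique solution of $f(t_\epsilon)=\epsilon$ (which exists and is unique because $f$ is, for each fixed value of $h_0$, a strictly increasing function of $t$ vanishing at $t=0$, ranging over a neighbourhood of $0$) and conclude that $\mathcal{K}(x^{(0)})=x(t_\epsilon)$, i.e. $\mathcal{K}$ is the time-$t_\epsilon$ map of the flow of the Hamiltonian vector field $X_H$ defined by \eqref{eq:GLV_ode}.

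Once this identification is in place, the rest is immediate. A function $G$ is invariant under $\mathcal{K}$ precisely when $G(x(t_\epsilon))=G(x(0))$ for all initial conditions and all small $\epsilon>0$; but if $G$ is a first integral of $X_H$, then $G$ is constant along every integral curve of $X_H$, so in particular $G\circ\mathcal{K}=G$. Applying this to the $\tfrac n2$ (resp.\ $\tfrac{n+1}2$) functions of Theorem \ref{thm:integrable} (resp.\ Theorem \ref{thm:integrable_odd}) — all of which are in involution and functionally independent, and all of which are first integrals of $X_H$ because they Poisson-commute with $H$ — gives that $\mathcal{K}$ is Liouville integrable with exactly those invariants. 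Likewise, applying it to the $n-1$ functionally independent first integrals of Theorem \ref{thm:ode_super} gives superintegrability of $\mathcal{K}$. For Liouville integrability of a map one also wants the invariants to be in involution with respect to a Poisson structure preserved by the map; that is supplied by Proposition \ref{prop:kahan_poisson}, which asserts that $\mathcal{K}$ is a Poisson map for $\PB$, and the involutivity is exactly the content of Proposition \ref{invol} and Theorems \ref{thm:integrable}, \ref{thm:integrable_odd}.

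The only genuine point requiring care — the main (small) obstacle — is the verification that $f$ is invertible near $0$ and that the substitution $f(t)=\epsilon$ really turns the solution formula into the Kahan formula, including the degenerate cases. When $h_0\neq0$, $f(t)=\frac1{h_0}\tanh\!\bigl(\frac{h_0 t}{2}\bigr)$ is strictly increasing, with image $(-1/|h_0|,1/|h_0|)$, so $t_\epsilon=\frac{2}{h_0}\operatorname{arctanh}(\epsilon h_0)$ is well defined for $|\epsilon h_0|<1$, i.e. for $\epsilon$ small. When $h_0=0$, $f(t)=t/2$ and $t_\epsilon=2\epsilon$; note this is consistent with the step size "$2\epsilon$" in Proposition \ref{prop:kahan}. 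One must also check that $v_i^{(0)}$ in \eqref{solkahan} matches $v_i$ evaluated at $x^{(0)}$, which is immediate from the definitions. I would also remark that the factor $1-\epsilon^2H^2$ appearing in the denominators is nonzero for small $\epsilon$, so the Kahan map and the comparison are valid on a full-measure open set, which suffices for the functional-independence statements. With these remarks the corollary follows with essentially no further computation.
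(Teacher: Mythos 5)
Your proposal is correct and follows essentially the same route as the paper: the corollary is deduced by matching the Kahan map (\ref{kahan}) with the explicit solution (\ref{solkahan}) via $f(t_\epsilon)=\epsilon$, so that $\mathcal{K}$ is a time advance map of the flow and therefore inherits all first integrals, with involutivity and functional independence supplied by Proposition \ref{invol}, Theorems \ref{thm:integrable}, \ref{thm:integrable_odd}, \ref{thm:ode_super} and the Poisson property by Proposition \ref{prop:kahan_poisson}. Your extra remarks on the invertibility of $f$ near $0$ and the case $h_0=0$ are consistent with the paper's treatment and add no difficulties.
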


Let us denote the $k$-th iterate of the Kahan map (\ref{kahan})  starting from the initial condition
$x^{(0)}=(x_1^{(0)}, \dots, x_n^{(0)})$ by $x^{(k)}$. Then the relation between the solutions to the continuous
system and the Kahan map can be written as $x_i(t_\epsilon)=x_i^{(1)}.$ Now notice that $t_\epsilon$ depends only
on $x^{(0)}$ through $h_0$; this implies that the restriction of $\mathcal{K}$ to the integral curve through
$x^{(0)}$ is the time $t_\epsilon$ flow of the continuous system (restricted to the integral curve through
$x^{(0)}$). Thus, $x^{(2)}$ is obtained from $x^{(1)}$ by the time $t_\epsilon$ flow, and hence from $x^{(0)}$ by
the time $2t_\epsilon$ flow, $x^{(2)}=x(2t_\epsilon)$; more generally, $x^{(m)}$ is obtained from $x^{(0)}$ by the
time $mt_\epsilon$ flow, $x^{(m)}=x(mt_\epsilon)$. It leads to the following proposition.
\begin{proposition}\label{prop:discrete_sol}
  The solution of the discrete system 
  \begin{equation} \label{kahanmapping}
    \tilde {x}_i={x}_i \frac{(1-\epsilon H)(1+\epsilon H)}
                            {(1-\epsilon H+2\epsilon v_{i-1})(1-\epsilon H+2\epsilon v_{i})}\;,\quad(i=1,\dots,n)
  \end{equation}
  with $H=\sum a_ix_i$ and initial condition ${x}^{(0)}$ is given by
  \begin{equation} \label{kahansol2}
    {x}_i^{(m)}={x}_i^{(0)} \frac{(\frac{1+ \epsilon h_0}{1-\epsilon h_0})^m h_0^2}
           {\(h_0+v_{i-1}^{(0)}((\frac{1+ \epsilon h_0}{1-\epsilon h_0})^m-1)\)
           \(h_0+v_{i}^{(0)}((\frac{1+ \epsilon h_0}{1-\epsilon h_0})^m-1)\)}\;.
  \end{equation}
  when $h_0$ (the value of $H$ at $x^{(0)}$) is different from zero. When $h_0=0,$
  \begin{equation}\label{kahansol3}
    x_i^{(m)}=x_i^{(0)}\frac1{\(1+2m\epsilon v^{(0)}_{i-1}\)\(1+2m\epsilon v_{i}^{(0)}\)}\;.
  \end{equation}
\end{proposition}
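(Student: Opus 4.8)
The plan is to derive the closed‑form solution \eqref{kahansol2}–\eqref{kahansol3} directly from the discussion that immediately precedes Proposition~\ref{prop:discrete_sol}, namely the identity $x_i^{(m)}=x_i(mt_\epsilon)$, where $t_\epsilon$ is the unique solution of $f(t_\epsilon)=\epsilon$ and $x(t)$ is the continuous solution given in Proposition~\ref{prp:ode_solutions}. So the whole proof is a substitution: I would plug $t=mt_\epsilon$ into formula \eqref{solkahan} and simplify, using the explicit form of $f$. The only thing to be careful about is that $f$ is piecewise, so the argument splits into the case $h_0\neq0$ and the case $h_0=0$.

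First, for $h_0\neq 0$, recall $f(t)=\frac{e^{h_0t}-1}{(e^{h_0t}+1)h_0}$. The key computation is to evaluate $f(mt_\epsilon)$ in terms of $\epsilon$. From $f(t_\epsilon)=\epsilon$ one solves $e^{h_0t_\epsilon}=\frac{1+\epsilon h_0}{1-\epsilon h_0}$, hence $e^{h_0\,mt_\epsilon}=\bigl(\frac{1+\epsilon h_0}{1-\epsilon h_0}\bigr)^m$. Writing $r:=\bigl(\frac{1+\epsilon h_0}{1-\epsilon h_0}\bigr)^m$, I then use the rewritten form of \eqref{solkahan} displayed just after Proposition~\ref{prp:ode_solutions}, namely
\[
  x_i(t)=\frac{x_i^{(0)}\,e^{th_0}h_0^2}{\bigl(h_0+(e^{th_0}-1)v_{i-1}^{(0)}\bigr)\bigl(h_0+(e^{th_0}-1)v_i^{(0)}\bigr)}\,,
\]
and substitute $e^{th_0}=r$. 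This yields exactly \eqref{kahansol2}. For $h_0=0$, one uses instead $f(t)=t/2$, so $f(t_\epsilon)=\epsilon$ gives $t_\epsilon=2\epsilon$, hence $mt_\epsilon=2m\epsilon$, and the $h_0=0$ branch of \eqref{solkahan} reads $x_i(t)=x_i^{(0)}/\bigl((1+2f(t)v_{i-1}^{(0)})(1+2f(t)v_i^{(0)})\bigr)$ with $f(t)=t/2$; substituting $t=2m\epsilon$ so that $2f(t)=2m\epsilon$ gives \eqref{kahansol3}.

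The one genuine verification needed is the iteration claim $x^{(m)}=x(mt_\epsilon)$ itself, but this has already been argued in the paragraph preceding the proposition: the case $m=1$ is the comparison of \eqref{solkahan} with \eqref{kahan} (which gives $x_i(t_\epsilon)=x_i^{(1)}$ since $f(t_\epsilon)=\epsilon$), and then the crucial observation is that $t_\epsilon$ depends on the point only through $h_0$, which is a conserved quantity; hence applying $\mathcal K$ repeatedly amounts to composing the time‑$t_\epsilon$ flow with itself, giving the time‑$mt_\epsilon$ flow and thus $x^{(m)}=x(mt_\epsilon)$. I would therefore present the proof as: (i) recall $x^{(m)}=x(mt_\epsilon)$ from the preceding discussion; (ii) solve $f(t_\epsilon)=\epsilon$ for $e^{h_0t_\epsilon}$ (resp.\ $t_\epsilon$ when $h_0=0$) and raise to the $m$‑th power; (iii) substitute into the appropriate branch of \eqref{solkahan} and simplify. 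There is essentially no obstacle; the mild point to watch is keeping the two cases $h_0\neq0$ and $h_0=0$ cleanly separated and making sure the denominators in \eqref{kahansol2} are the correctly simplified form of those in \eqref{solkahan} after the substitution $e^{th_0}=\bigl(\frac{1+\epsilon h_0}{1-\epsilon h_0}\bigr)^m$.
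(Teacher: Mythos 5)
Your proposal is correct and follows essentially the same route as the paper: both rely on the identity $x^{(m)}=x(m t_\epsilon)$ established in the preceding discussion, solve $f(t_\epsilon)=\epsilon$ to get $e^{h_0 t_\epsilon}=\frac{1+\epsilon h_0}{1-\epsilon h_0}$, and substitute into the continuous solution of Proposition~\ref{prp:ode_solutions}, treating $h_0=0$ separately via $f(t)=t/2$. The only cosmetic difference is that you plug $e^{mt_\epsilon h_0}$ into the rewritten (exponential) form of \eqref{solkahan}, while the paper computes $f(mt_\epsilon)$ explicitly and substitutes into \eqref{solkahan} itself; the two simplifications are equivalent.
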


\begin{proof}
In view of Proposition \ref{prp:ode_solutions},
\begin{eqnarray}\label{eq:iterate}
  x_i^{(m)}&=&x_i(mt_\epsilon)\nonumber\\
  &=&x_i^{(0)}\frac{(1-f(mt_\epsilon) h_0)(1+f(mt_\epsilon)h_0)}
  {\(1-f(mt_\epsilon)h_0+2f(mt_\epsilon)v^{(0)}_{i-1}\)
  \(1-f(mt_\epsilon)h_0+2f(mt_\epsilon)v_{i}^{(0)}\)}\;.\nonumber\\
\end{eqnarray}
When $h_0\neq0$, it follows easily from $f(t)=\frac{e^{th_0}-1}{(e^{th_0}+1)h_0}$ and $f(t_\epsilon)=\epsilon$
that $ e^{t_\epsilon h_0}=\frac{1+h_0\epsilon}{1-h_0\epsilon}.$ In turn, we can compute $f(mt_\epsilon)$ from it,
namely
\begin{equation}\label{eq:fm}
  f(mt_\eps)=\frac1{h_0}\frac{e^{mt_\epsilon h_0}-1}{e^{mt_\eps h_0}+1}=\frac1{h_0}\,
  \frac{\left(\frac{1+h_0\epsilon}{1-h_0\epsilon}\right)^m-1}
      {\left(\frac{1+h_0\epsilon}{1-h_0\epsilon}\right)^m+1}\;.
\end{equation}%
It now suffices to substitute (\ref{eq:fm}) in (\ref{eq:iterate}) and to simplify the resulting expression to
obtain (\ref{kahansol2}). When $h_0=0$, we have that $f(mt_\epsilon)=m\epsilon$, since $f(t)=t/2$. Substituted in
(\ref{eq:iterate}) (with $h_0=0$), we get at once (\ref{kahansol3}).
\end{proof}

\section{Conclusion}\label{sec:comments}

We presented a new class of generalized Lotka-Volterra systems which are, together with their Kahan
discretizations, Liouville integrable and superintegrable, and we provided their explicit solutions. Since linear
Hamiltonians are always preserved under Kahan discretization and since the Poisson structure that we used is
quadratic, it is natural to ask which quadratic Poisson structures on $\bbR^n$ are preserved by the Kahan
discretization of every Hamiltonian vector field with linear Hamiltonian; in view of what we have shown, the
Poisson structure defined by defined by the brackets $\{x_i,x_j\}:= x_i x_j$, for $1\leq i < j \leq n$, belongs to
this class. The Hamiltonian systems which are defined by them would then be good candidates for being Liouville
integrable and/or superintegrable. In view of the recent developments in discretization of polynomial vector fields
by polarization (\cite{CMMOQ}), similar questions can also be considered for higher degree polynomial Hamiltonian
vector fields.

\bigskip\noindent
\emph{Acknowledgement.}\quad {This work was supported by the Australian Research Council. GRWQ is
grateful to Arte\-studioginestrelle and its Director Marina Merli, for
providing the stimulating environment where some of this work was carried
out.}

\end{document}